\algnewcommand\And{\textbf{and}}
\algnewcommand\Or{\textbf{or}}
\algnewcommand\Not{\textbf{not}}
\algnewcommand\In{\textbf{in}}
\algnewcommand\Each{\textbf{each}}
\newcommand{\sub}{\mathbf{sub}}
\newtheorem{theorem}{Theorem}[section]          
\newtheorem{lemma}[theorem]{Lemma}             
\newtheorem{corollary}[theorem]{Corollary}         
\newtheorem{remark}[theorem]{Remark}           
\newtheorem{conjecture}[theorem]{Conjecture}  
\newenvironment{proof}{{\em Proof.}}{\hspace*{\fill}$\Box$\par\vspace{3mm}}
\newcommand{\squishlist}{
 \begin{list}{$\bullet$}
  { \setlength{\itemsep}{0pt}
     \setlength{\parsep}{3pt}
     \setlength{\topsep}{3pt}
     \setlength{\partopsep}{0pt}
     \setlength{\leftmargin}{2.5em}
     \setlength{\labelwidth}{1em}
     \setlength{\labelsep}{0.5em} } }
\newcommand{\squishlisttwo}{
 \begin{list}{$\triangleright$}
  { \setlength{\itemsep}{0pt}
     \setlength{\parsep}{0pt}
    \setlength{\topsep}{0pt}
    \setlength{\partopsep}{0pt}
    \setlength{\leftmargin}{2em}
    \setlength{\labelwidth}{1.5em}
    \setlength{\labelsep}{0.5em} } }
\newcommand{\squishend}{
  \end{list}  }
\definecolor{verbgray}{gray}{0.9}
\newcommand{\bB}{\mathbf{B}}
\definecolor{shadecolor}{rgb}{.91, .91, .91}
\definecolor{bordercolor}{rgb}{.8, .8, .6}
\definecolor{ultramarine}{rgb}{0, 0.125, 0.376}
 \definecolor{arsenic}{rgb}{0.23, 0.27, 0.29}
 \definecolor{beige}{rgb}{0.96, 0.96, 0.86}
\definecolor{amber}{rgb}{1.0, 0.75, 0.0}
\definecolor{orange}{rgb}{1.0, 0.49, 0.0}
\definecolor{dandelion}{rgb}{0.94, 0.88, 0.19}
  \definecolor{indiagreen}{rgb}{0.07, 0.53, 0.03}
  \definecolor{huntergreen}{rgb}{0.21, 0.37, 0.23}
\newcommand{\blue}[1] {\textcolor{blue}{#1}}
\newcommand{\red}[1] {\textcolor{red}{#1}}
\newcommand{\bblue}[1] {{\bf \textcolor{blue}{#1}}}
\newcommand{\defo}[1] {\emph{\textcolor{blue}{#1}}}
\definecolor{shadecolor}{rgb}{.9, .9, .9}
 \colorlet{framecolor}{ultramarine}
  \colorlet{exframecolor}{orange}
    \newenvironment{frshaded*}{%
    \MakeFramed {\advance\hsize-\width \FrameRestore}}%
    {\endMakeFramed}
    \newcounter{examplecounter}
\newenvironment{exam}{
 \begin{frshaded*}
    \refstepcounter{examplecounter}%
    \noindent
  \textbf{Example \arabic{examplecounter}}%
  \quad
}{%
\end{frshaded*}
}
\newenvironment{frshaded2*}{%
    \MakeFramed {\advance\hsize-\width \FrameRestore}}%
    {\endMakeFramed}
\newenvironment{boxed2}{
 \begin{frshaded2*}
}{%
\end{frshaded2*}
}
\newcommand{\myline}{\par
  \kern3pt 
  \hrule height 1.5pt
  \kern2pt 
  \hrule height 0.8pt
  \kern3pt 
}
\DeclareMathOperator{\pr}{pr}
\DeclareMathOperator{\disc}{disc}
\DeclareMathOperator{\csub}{\mathbf{C}}
\newcommand{\change}[1] {\textcolor{red}{#1}}
\begin{document}

\title{ Investigating the discrepancy property of de Bruijn sequences}

\author{
Daniel Gabric \ \ \   \
Joe Sawada\footnote{Research supported by the \emph{Natural Sciences and Engineering Research Council of Canada} (NSERC) grant RGPIN-2018-04211. 
}  \\ \\
}

\maketitle

\begin{abstract}
The discrepancy of a binary string refers to the maximum (absolute) difference between the number of ones and the number of zeroes over all possible substrings of the given binary string.  
We provide an investigation of the discrepancy of over a dozen simple constructions of de Bruijn sequences as well as de Bruijn sequences based on linear feedback shift registers whose feedback polynomials are primitive.  Furthermore,  we demonstrate constructions that attain the lower bound of $\Theta(n)$ and a new construction that attains the previously known upper bound of $\Theta(\frac{2^n}{\sqrt{n}})$. 
This extends the work of Cooper and Heitsch~[\emph{Discrete Mathematics}, 310 (2010)].

\end{abstract}

\section{Introduction}

Let $\bB(n)$ denote the set of binary strings of length $n$. 
A \defo{de Bruijn sequence} is a circular string of length $2^n$ that contains every string in $\bB(n)$ as a substring.  Thus, each length-$n$ substring must occur exactly once.  As an example, 
\begin{equation} \label{eq:db6}
000000\underline{1111110111100111010111}000110110100110010110000101010001001
\end{equation}
is a de Bruijn sequence of order $n=6$; it contains each length-$6$ binary string as a substring when viewed circularly.  
There is an extensive literature on de Bruijn sequences motivated in part by their random-like properties.  As articulated by Golomb~\cite{something}, de Bruijn sequences have the following properties: 
\begin{itemize}
\item they are \defo{balanced}, as they contain the same number of $0$s and $1$s;
\item they satisfy a \defo{run property}, as there are an equal number of contiguous runs of $0$s and $1$s of the same length in the sequence;
\item they satisfy a \defo{span-$n$ property}, as they contain every distinct length $n$ binary string as a substring.
\end{itemize}  
From the example in (\ref{eq:db6}) for $n=6$, note that there are exactly $2^{n-1}$ $0$s and $1$s respectively; there are $2^{n-2}$ contiguous runs of $0$s and $1$s respectively; and by definition, it contains every distinct length $n$ binary string as a substring.  

Despite these properties, many de Bruijn sequences display other properties that are far from random.   For instance, consider the greedy prefer-$1$ construction~\cite{martin}.  After starting with an initial seed, successive bits are appended by always trying a $1$ first.  Only if adding a $1$ results in repeating a length-$n$ substring will a $0$ be appended instead. The resulting de Bruijn sequence for $n=6$ is the one obtained in (\ref{eq:db6}) by rotating the initial prefix of $0$s to the suffix.
As one would expect, it has more 1s than 0s at the start of the sequence.    
One measure that accounts for this is the \defo{discrepancy}, which is informally defined to be the maximum absolute difference between the number of $0$s and $1$s in any substring of a given sequence.

To formally define discrepancy, we first introduce some notation. Let $w$ be a binary string. Let $|w|_a$ denote the number of occurrences of the symbol $a$ in $w$. Let $\csub(w)$ denote the set of all substrings of $w$ where we interpret $w$ as a circular string. For example, taking $w=110$ we have that $|w|_1 = 2$ and $\csub(w) = \{\epsilon, 0, 1, 11, 10, 01, 110,101,011\}$. Then the \defo{discrepancy} $\disc(w)$ of $w$  is defined as \[\disc(w) = \max_{u \in \csub(w)}\Big\lvert ~ |u|_1-|u|_0 ~\Big\rvert.\]

The discrepancy of the sequence in (\ref{eq:db6}) is $|17-5|=12$ as witnessed by the underlined substring.  The sequences generated by the prefer-$1$ construction are known to have discrepancy $\Theta(\frac{2^n\log n}{n})$~\cite{cooper} with an exact formulation based on the Fibonacci and Lucas numbers~\cite{cooper1}. In contrast, the expected discrepancy of a random sequence of length $2^n$ is $\Theta(2^{n/2} \sqrt{\log n})$~\cite{cooper}. Letting $\mathcal{D}$ be a de Bruijn sequence of order $n$, $\disc(\mathcal{D})$ is bounded below by $n$ since $\mathcal{D}$ must contain $1^n$ as a substring. In other words, $\disc(\mathcal{D}) \in \Omega(n)$. By putting upper bounds
on character sums of non-linear recurrence sequences, an upper bound
of $\disc(\mathcal{D}) \in O(2^n/\sqrt{n})$ can be obtained; see page 1131 in~\cite{discrep1} for an explicit calculation. One of the main results of this paper is to build on the preliminary work by the authors~\cite{Gabric2017} to demonstrate de Bruijn sequence constructions with discrepancies that obtain these asymptotic lower and upper bounds.


Some applications in pseudo-random bit generation require de Bruijn sequences that do not have large discrepancy.  For example, when used as a carrier signal, a de Bruijn sequence with a large discrepancy causes spectral peaks that could interfere with devices operating at these frequencies~\cite{pseudo1}. 
Similar measures described as ``balance'' and ``uniformity'' are discussed in~\cite{hsieh}.  However, they focus only on $n=2$ and instead vary the size of the alphabet. They explain that de Bruijn sequences with good balance and uniformity are useful in the planning of reaction time experiments~\cite{emerson, sohn}.  De Bruijn sequences with high discrepancy necessarily have poor balance and uniformity. 

In this paper, we extend the work initiated by Cooper and Heitsch~\cite{cooper} providing a more complete analysis of discrepancy for a wide variety of de Bruijn sequence constructions.
 
\begin{enumerate}
\item We evaluate the discrepancies of an additional $12$ simple/interesting de Bruijn sequence constructions up to $n=30$.
\item We evaluate the discrepancies of all de Bruijn sequences obtained from  linear feedback shift registers (LFSRs) based on primitive polynomials up to $n=25$.
\item We demonstrate de Bruijn sequences constructions with discrepancy that attain the asymptotic lower bound of $\Theta(n)$. 
\item We present a new de Bruijn sequence construction with discrepancy that attains the asymptotic upper bound of $\Theta\big(\frac{2^n}{\sqrt{n}}\big)$. 

\end{enumerate}

The remainder of this paper is presented as follows.
In Section~\ref{sec:back} we present background definitions and notation.   In Section~\ref{sec:overview} we present an overview of our experimental results.  They are partitioned into five groups which are further analyzed in Sections~\ref{sec:CCR}, \ref{sec:same}, \ref{sec:pcr},  \ref{sec:weight}, and \ref{sec:lfsr}.
We conclude in Section~\ref{sec:fut} with open problems and future avenues of research.

\subsection{Background and notation} \label{sec:back}

Let $\alpha = a_1a_2\cdots a_n$ be a binary string in $\mathbf{B}(n)$.  
A \defo{feedback function} is a function $f$ that maps $\mathbf{B}(n)$ to $\{0,1\}$.  A \defo{feedback shift register} (FSR) is a function  on $\mathbf{B}(n)$ that maps a string $\alpha$ to $a_2a_3\cdots a_nf(\alpha)$, where $f(\alpha)$ is feedback function.  If $f$ is linear, then an FSR is called a \defo{linear feedback shift register} (LFSR). The following four ``simple'' LFSRs are presented on page 171 in the third edition of the classic work by Golomb~\cite{golomb2017}. We follow their notation letting the operator $\oplus$ represent addition modulo $2$.

\begin{itemize}
    \item The \defo{pure cycling register} (PCR) has feedback function $f(\alpha) = a_1$. 
    \item  The \defo{complemented cycling register} (CCR) has feedback function $f(\alpha) = 1 \oplus a_1$. 
    \item  The \defo{pure summing register} (PSR) has feedback function $f(\alpha) = a_1 \oplus a_2 \oplus \cdots \oplus a_n$.
    \item  The \defo{complemented summing register} (CSR) has feedback function $f(\alpha) = 1 \oplus a_1 \oplus a_2 \oplus \cdots \oplus a_n $.
\end{itemize} 
In addition to these four LFSRs, there is one other LFSR that relates to the de Bruijn sequences under investigation~\cite{simple_same,sala}.
\begin{itemize}
    \item The \defo{pure run-length register} (PRR) has feedback function $f(\alpha) = a_1 \oplus a_2 \oplus a_n$. 
\end{itemize}

Let $0^n$ denote $n$ copies of $0$ concatenated together.
When the feedback function of an LFSR is based on a primitive polynomial (discussed further in Section~\ref{sec:lfsr}), then its corresponding LFSR produces a \defo{maximal-length sequence} or \defo{m-sequence}, which is a de Bruijn sequence without the $0^n$ substring. Adding an additional $0$ before the substring $0^{n-1}1$ in an  m-sequence yields a regular de Bruijn sequence.

For the remainder of the paper, when discussing a specific algorithm for constructing a de Bruin sequence we will put it in bold, e.g., the greedy {\bf Prefer-1} construction.

\subsection{The discrepancy of de Bruijn sequence constructions up to $n=25$} \label{sec:overview}

In Table~\ref{tab:discreps} we present exact discrepancies for $13$ de Bruijn sequence constructions for values of $n$ between $10$ and $25$.   
The results are partitioned into the following four groups based on increasing discrepancy.   A larger table up to $n=30$ is provided in the appendix.
\begin{itemize}
\item[] {\bf   Group 1}: Constructions based on the CCR.
\item[] {\bf   Group 2}: Constructions based on the PRR, including the greedy prefer-same ({\bf Prefer-same}) and prefer-opposite ({\bf Prefer-opposite}) constructions.  
\item[] {\bf   Group 3}: Constructions based on the PCR, including the {\bf Prefer-1} construction.
Table~\ref{tab:discreps} also shows a random entry based on taking the average discrepancy of 10000 randomly generated\footnote{The sequences were generated in \texttt{C} using the \texttt{srand} and \texttt{rand} functions.} sequences of length $2^n$.  
\item[] {\bf   Group 4}: Constructions based on joining smaller weight-range cycles, including one based on the PSR/CSR.
 \end{itemize}
Details about the constructions from each group are presented in their respective upcoming sections.  Implementations for each of these constructions can be found at \footnotesize \url{http://debruijnsequence.org}. \normalsize
Each de Bruijn sequence  can be generated in $O(n)$ time or better per bit using only $O(n)$ space.

\begin{table}[t]
\begin{center}
\scriptsize
\begin{tabular} {c |  cccc | c c c   }
  & \multicolumn{4}{c}{\bf ( Group 1 )}  &  \multicolumn{3}{c}{\bf ( Group 2 ) } \\
$n$ 	& {\bf Huang} & {\bf CCR2} & {\bf CCR3} & {\bf CCR1} & {\bf Pref-same} & {\bf Lex-comp} & {\bf Pref-opposite}   \\ \hline
10 &      12 &       13 &   13   &     16 &    ~  24 &        ~ 24    &    ~ 27  \\
11 &      13 &       14 &  15  &      18 &    ~  29 &        ~ 29    &    ~ 34  	\\
12 &      15 &      16 &  16  &      22 &     ~ 35 &        ~ 35    &    ~ 43  	\\
13 &      16 &      17 &  18  &      23 &    ~ 43 &        ~ 43    &    ~ 52  	\\
14 &      18 &      19 &  20  &      30 &     ~ 48 &        ~ 48    &    ~ 63  	\\
15 &      19 &      21 &  21  &      29 &    ~ 59 &        ~ 59    &    ~ 74  	\\
16 &      21 &      22 &  23  &      36 &    ~  68 &        ~ 68    &    ~ 87 \\
17 &      22 &      24 &  25  &      37 &     ~ 79 &        ~ 79    &   100	\\
18 &      24 &      26 &  26  &      43 &     ~ 88 &         ~ 88   &   115  	\\
19 &      25 &      27 &  28  &      43 &     103 &       103  &    130	\\
20 &      27 &      29 &  30  &      52 &     114 &       114   &    147	\\
21 &      28 &      31 &  31  &      50 &     127 &       127  &    164	\\
22 &      30 &      32 &  33  &      59 &     142 &       142  &    183	\\
23 &      31 &      34 &  35  &      59 &     155 &       155  &    202	\\
24 &      33 &      36 &  36  &      67 &     172 &       172  &    223	\\
25 &      35 &      37 &  38  &      66 &     187 &       187  &    244	 \\
\end{tabular}
\end{center}

\medskip

\begin{center}
\scriptsize
\begin{tabular} {c |  r r r r r | r r}
  & \multicolumn{5}{c}{\bf ( Group  3 )}  &  \multicolumn{2}{c}{\bf ( Group 4 ) } \\
$n$ 	&  {\bf PCR4} &  \bblue{Random}  &  {\bf PCR3}  & {\bf PCR2}   & {\bf  Prefer-1/PCR1}  & {\bf Cool-lex}  & {\bf Weight-range}  \\ \hline
10 &       29     &  \blue{50} 	&      75           	&     101   &     120   &     131	    &     131	\\
11 &      41      &   \blue{71} 	&     141    	 &     180 &     222  &     257	&     257	\\
12 &     51       &  \blue{101} 	&     248   		&     321 &     416  &     468 	&     468	\\
13 &     70       &  \blue{143}	&     468   		&     587 &     784  &     801 	&     930	\\
14 &     85       &  \blue{203} 	&     850   		&    1065 &    1488  &    1723 	&    1723	\\
15 &    110      &   \blue{288}  	&    1604   	&    1974 &    2824  &    3439 	&    3439	\\
16 &    175      &  \blue{407}  	&    2965   	&    3632 &    5376  &    6443 	&    6443	\\
17 &    246      &  \blue{575} 	&    5594   	&    6785 &   10229  &   11452 	&   12878	\\
18 &    326      &   \blue{815} 	&   10461    	&   12635 &   19484  &   24319 	&   24319	\\
19 &     462     &  \blue{1157} 	&   19765   	&   23746 &   37107  &   48629	&   48629	 \\
20 &       730   &  \blue{1634}   	&   37243  	&   44585 &   71250  &   92388 		&   92388	\\
21 &     954     &  \blue{2311} 	&   70575   	&   84270 &  138332  &  167975 	&  184766		\\
22 &    1327    &   \blue{3264} 	&  133737  	&  159281 &  268582  &  352727 	&  352727		\\
23 &    1820    &    \blue{4565}	&  254322 	&  302449 &  521553  &  705443	&  705443		 \\
24 &    2684    &    \blue{6252}	&  484172 	&  574819 & 1012795  & 1352090	& 1352090 		\\
25 &    3183    &   \blue{9192} 	&  924071 	& 1096009 & 1966813  & 2496163	& 2704168		 \\
\end{tabular}
\end{center}
\caption{Discrepancies of de Bruijn sequence constructions of order $n$ ordered by increasing discrepancy and partitioned into four groups.  }
\label{tab:discreps}
\end{table}

\newpage
We also consider a fifth group of de Bruijn sequences, where each sequence corresponds to a unique primitive polynomial.
\begin{itemize}
\item[] {\bf   Group 5}: Constructions based on primitive polynomials and their corresponding LFSRs.
 \end{itemize}
By generating all primitive polynomials of degree $n$ and their corresponding LFSRs, we compute the minimum, the maximum, and average discrepancies for their corresponding de Bruijn sequences in 
 Table~\ref{tab:lfsr}.  Any related m-sequence known to be generated by such an LFSR can be completely determined after reading only $2n$ bits~\cite{massey}.  Thus, their application for generating pseudo-random numbers is limited.  Further discussion is given in Section~\ref{sec:lfsr}.

\begin{table}[t]
\begin{center}
\scriptsize
\begin{tabular} {c |  r r r r  | r}
  & \multicolumn{4}{c}{\bf ( Group  5 )}  \\
$n$ 	&  {\bf min} &  {\bf avg} &  \bblue{Random}  &   {\bf max} &  {\bf LFSRs}  \\ \hline
10 &       36       & 41  &  \blue{50} 	            &    46     & 60\\
11 &      51      & 58 &   \blue{71}     	        &    68     & 176 \\
12 &     72       & 84 &  \blue{101}    	        &    106    & 144\\
13 &     97       &118 &  \blue{143}	      	    &     144   & 630\\
14 &     141      &167 &  \blue{203}    	        &    206    & 756\\ 
15 &    200      & 236 &   \blue{288}     	        &    294    &  1800\\ 
16 &    280      & 335 &  \blue{407}     	        &    432    & 2048\\ 
17 &    391      & 473 &  \blue{575} 	      	    &    625    & 7710\\ 
18 &    544      & 669 &   \blue{815}     	        &   860     & 7776 \\   
19 &     775     & 947 &  \blue{1157}    	        &   1262    & 27594 \\   
20 &      1066   & 1341  &  \blue{1634}     	    &   1842    & 24000\\ 
21 &     1500    & 1896  &  \blue{2311}    	        &   2619    & 84672\\  
22 &    2128     & 2681 &   \blue{3264}   	        &  3634     & 120032\\ 
23 &    3009     & 3793 &     \blue{4565} 	        &  5326     & 356960 \\ 
24 &    4236    &  5362 & \blue{6252}   	        &  7545     & 276480\\ 
25 &    5905    &  7586 & \blue{9192}  	                & 11291         & 1296000\\  
\end{tabular}
\end{center}

\caption{The minimum, average, and maximum discrepancies of de Bruijn sequence constructions of order $n$ based on primitive polynomials and their corresponding LFSRs.  The number of LFSRs based on primitive polynomials is given in the final column.}
\label{tab:lfsr}
\end{table}

\subsection{Computing the discrepancy of a de Bruijn sequence}

Given a de Bruijn sequence of order $n$, how quickly can the discrepancy be calculated? De Bruijn sequences are exponentially long with respect to their order. Thus, it is natural to try to find a fast algorithm to compute the discrepancy of de Bruijn sequences.

A na\"{i}ve way to calculate the discrepancy of a de Bruijn sequence $\mathcal{D}$ is to consider every substring $u$ of $\mathcal{D}$ and compute $\Big\lvert~|u|_1-|u|_0 ~\Big\rvert$, keeping track of the maximum value. In a circular string of length $m$, there are $m$ substrings of length $1$, $m$ substrings of length $2$, and more generally $m$ substrings of length $j$. Therefore, there are $\Theta(m^2)$ substrings in a length-$m$ circular string. For every substring $u$ that the algorithm visits, the absolute difference between the number of $0$s and the number of $1$s in $u$ is computed, which takes $\Theta(|u|)$ time. Thus, when given a de Bruijn sequence of order $n$ as input, this algorithm runs in $\Theta(2^{3n})$ time.
A slight upgrade from this na\"{i}ve approach is obtained by observing that every substring of a circular string $w$ is  a prefix of a rotation of $w$. For every rotation of $w$, we scan from left to right one bit at a time while keeping track of the number of $0$s, the number of $1$s, and the maximum absolute difference between them. This algorithm runs in $\Theta(2^{2n})$ time when given a de Bruijn sequence of order $n$ as input.

We show that the discrepancy of a de Bruijn sequence of order $n$ can be calculated in $\Theta(2^n)$ time. First, we define some notation. Let $w$ be a binary string and let
\[d_0(w) = \max_{w=uv} \big( |u|_0-|u|_1 \big)\] and \[d_1(w) = \max_{w=uv}\big( |u|_1-|u|_0 \big).\] In other words, $d_0(w)$ (resp., $d_1(w)$) denotes the maximum difference between the numbers of $0$s and $1$s (resp., $1$s and $0$s) in any prefix of $w$.

\begin{lemma}\label{lemma:substring}
Let $\mathcal{D}$ be a de Bruijn sequence of order $n$. There exists a string $y$ such that $\mathcal{D}=xyz$ and $\disc(\mathcal{D})=\Big\lvert~ |y|_1-|y|_0~\Big\rvert$.
\end{lemma}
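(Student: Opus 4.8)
The plan is to show that although $\disc(\mathcal{D})$ is defined as a maximum over all \emph{circular} substrings, it is always attained by a genuine \emph{linear} factor of $\mathcal{D}$, i.e.\ a block $y$ with $\mathcal{D}=xyz$ that does not wrap around the seam between the last and first symbols. Write $\mathcal{D}=a_1a_2\cdots a_m$ with $m=2^n$, and record the running count $f(k)=|a_1\cdots a_k|_1-|a_1\cdots a_k|_0$ for $0\le k\le m$, with $f(0)=0$. Since $\csub(\mathcal{D})$ is finite, the maximum defining $\disc(\mathcal{D})$ is attained by some $u\in\csub(\mathcal{D})$; by symmetry I assume $|u|_1-|u|_0=\disc(\mathcal{D})$ (the opposite case is identical with the roles of $0$ and $1$ swapped). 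If $u$ is itself a factor of the linear string $\mathcal{D}$ we are done, taking $y=u$, so the real work lies in the case where $u$ wraps around the seam.

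The key ingredient is that every de Bruijn sequence is \emph{balanced}: $|\mathcal{D}|_1=|\mathcal{D}|_0=2^{n-1}$, so $f(m)=0$. For any circular substring $u$, let $y$ denote the \emph{complementary} arc, so that $u$ and $y$ partition the circle with no overlap. Then $(|u|_1-|u|_0)+(|y|_1-|y|_0)=|\mathcal{D}|_1-|\mathcal{D}|_0=0$, whence $\bigl|\,|u|_1-|u|_0\,\bigr|=\bigl|\,|y|_1-|y|_0\,\bigr|$. Thus $u$ and its complement witness the same absolute difference, and I may freely replace $u$ by $y$.

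It then remains to observe that at least one of $u$ and its complement is a linear factor. Fix the seam between $a_m$ and $a_1$; since $u$ and $y$ partition the circle into two arcs, the seam lies in the interior of at most one of them (it lies in exactly one unless it coincides with a cut point between them, in which case neither arc wraps). If $u$ wraps around the seam, say $u=a_i\cdots a_m a_1\cdots a_j$ with $j<i$, then $y$ occupies the contiguous block of positions $j+1,\ldots,i-1$ lying entirely within $\{1,\ldots,m\}$, so $y$ is a factor with $\mathcal{D}=xyz$ where $x=a_1\cdots a_j$ and $z=a_i\cdots a_m$. Combining this with the previous paragraph, whenever the maximizing $u$ wraps around I replace it by its complement, obtaining a linear factor $y$ with $\bigl|\,|y|_1-|y|_0\,\bigr|=\disc(\mathcal{D})$, as required.

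The substantive point — and the only place where care is needed — is the seam bookkeeping: verifying that when $u$ crosses the seam its complement does \emph{not}, so that the replacement really lands inside the linear string. Everything else reduces to the balance identity $f(m)=0$, which turns the wrap-around difference of $u$ into the (negated) difference of the contiguous block it omits; the degenerate cases $u=\epsilon$ and $u=\mathcal{D}$ contribute absolute difference $0$ and can be ignored since $\disc(\mathcal{D})\ge n\ge 1$.
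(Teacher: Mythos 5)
Your proposal is correct and uses the same key idea as the paper's proof: since a de Bruijn sequence is balanced, a wrap-around substring $zx$ and its complementary contiguous block $y$ satisfy $\bigl|\,|zx|_1-|zx|_0\,\bigr|=\bigl|\,|y|_1-|y|_0\,\bigr|$, so the maximum is always attained by a linear factor. The paper phrases this as a short proof by contradiction while you argue directly with some extra seam bookkeeping, but the substance is identical.
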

\begin{proof}
Suppose we cannot write $\mathcal{D}=xyz$ where $\disc(\mathcal{D})=\Big\lvert~ |y|_1-|y|_0~\Big\rvert$ for any strings $x,y,z$. Then we must have $\disc(\mathcal{D})= \Big\lvert~ |zx|_1-|zx|_0~\Big\rvert$ for some choice of $x,z$. However, since $\mathcal{D}$ is a de Bruijn sequence, it must contain the same number of $1$s as $0$s. Thus, $\disc(\mathcal{D}) = \Big\lvert~ |zx|_1-|zx|_0~\Big\rvert = \Big\lvert~ |y|_1-|y|_0~\Big\rvert$, which contradicts our initial assumption.
\end{proof}

\begin{lemma}\label{lemma:linearAlgorithm}
Let $\mathcal{D}$ be a de Bruijn sequence of order $n$. Then $\disc(\mathcal{D})= d_0(\mathcal{D}) + d_1(\mathcal{D})$.
\end{lemma}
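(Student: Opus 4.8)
The plan is to translate both sides of the claimed identity into statements about a single running-sum (walk) function and then show they coincide. Write $\mathcal{D} = a_1 a_2 \cdots a_m$ with $m = 2^n$, and for $0 \le k \le m$ define the partial sum $S_k = |a_1 \cdots a_k|_1 - |a_1 \cdots a_k|_0$, treating each $1$ as a $+1$ step and each $0$ as a $-1$ step, so that $S_0 = 0$. Because $\mathcal{D}$ is a de Bruijn sequence it is balanced, hence $S_m = 0$ as well. For any contiguous (non-wrapping) substring $y = a_{i+1} \cdots a_j$ one has $|y|_1 - |y|_0 = S_j - S_i$, so the signed difference of any linear substring is just a difference of two partial sums.

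Next I would read off the two prefix quantities directly from the walk. Since the maximizations defining $d_0$ and $d_1$ range over all prefixes of $\mathcal{D}$ (equivalently over all $0 \le k \le m$), we have $d_1(\mathcal{D}) = \max_{0 \le k \le m} S_k$ and $d_0(\mathcal{D}) = \max_{0 \le k \le m}(-S_k) = -\min_{0 \le k \le m} S_k$. Both are nonnegative because $k = 0$ contributes $S_0 = 0$. Therefore $d_0(\mathcal{D}) + d_1(\mathcal{D}) = \max_k S_k - \min_k S_k$, and it remains only to show that this right-hand side equals $\disc(\mathcal{D})$.

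For that I would invoke Lemma~\ref{lemma:substring}, which guarantees that the discrepancy is realized by a genuine non-wrapping substring $y$, so that $\disc(\mathcal{D}) = \max_{0 \le i < j \le m} \big\lvert\, S_j - S_i \,\big\rvert$. Since every linear substring is in particular a circular substring, the reverse inequality $\disc(\mathcal{D}) \ge \max_{i<j} \big\lvert\, S_j - S_i \,\big\rvert$ is immediate, giving equality. Finally, choosing $i$ and $j$ to be the indices attaining $\min_k S_k$ and $\max_k S_k$ (in whichever order they occur) exhibits a pair with $\big\lvert\, S_j - S_i \,\big\rvert = \max_k S_k - \min_k S_k$, while no pair can exceed this; hence $\disc(\mathcal{D}) = \max_k S_k - \min_k S_k = d_0(\mathcal{D}) + d_1(\mathcal{D})$.

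The one step that needs care---and the only place the argument could go wrong---is the passage from substrings to the unordered extrema of the walk. A priori the argmin of $S$ could occur after its argmax, in which case the natural substring runs ``backwards''; I would handle this by noting that $\big\lvert\, |y|_1 - |y|_0 \,\big\rvert$ is symmetric in the two endpoints, so the substring running from the earlier extremal index to the later one realizes the full range $\max_k S_k - \min_k S_k$ regardless of their order. Everything else is bookkeeping on partial sums, and the balance property $S_0 = S_m = 0$ is precisely what keeps $d_0(\mathcal{D})$ and $d_1(\mathcal{D})$ nonnegative and makes the endpoints behave.
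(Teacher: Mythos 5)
Your proposal is correct and follows essentially the same route as the paper's proof: both reduce the discrepancy to differences of the running sums $S_k$ (the paper's $j_i$), invoke Lemma~\ref{lemma:substring} to restrict attention to non-wrapping substrings, and observe that the extreme values of the walk give exactly $d_0(\mathcal{D})+d_1(\mathcal{D})$. If anything, your treatment is slightly more careful than the paper's at the one delicate point---the order in which the walk attains its maximum and minimum---which the paper's proof glosses over by an informal ``maximized when'' argument, whereas you resolve it explicitly via the symmetry of $\bigl\lvert\,|y|_1-|y|_0\,\bigr\rvert$ in its endpoints.
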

\begin{proof}
Let $\gamma_i$ denote the length-$i$ prefix of $\mathcal{D}$. Let $j_0,j_1,\ldots, j_{2^n}$ denote a sequence of integers such that $j_0=0$ and $j_i =  |\gamma_i|_1-|\gamma_i|_0$ for all $i\in \{1,2,\ldots, 2^n\}$. Since $\mathcal{D}$ is a de Bruijn sequence, it has an equal number of $1$s and $0$s. Thus, $j_{2^n} = 0$.
 By Lemma~\ref{lemma:substring}, we can write $\mathcal{D}=xyz$ for strings $x,y,z$ such that $\disc(\mathcal{D})=\Big\lvert~ |y|_1-|y|_0~\Big\rvert$. The number of $1$s in $y$ is equal to the number of $1$s in $xy$ minus the number of $1$s in $x$. The same is true for the number of $0$s in $y$. 
 Therefore,  $|y|_1-|y|_0 = j_{|xy|}- j_{|x|}$. 
 The value  $\Big\lvert~  j_{|xy|}-j_{|x|}~\Big\rvert$  is maximized when either $j_{|x|}$ is as large as possible and $j_{|xy|}$ is as small as possible, or when $j_{|x|}$ is as small as possible and $j_{|xy|}$ is as large as possible.
 In the former case, the value corresponds to $d_1(\mathcal{D}) - (- d_0(\mathcal{D}))$, and in the latter case the
 value corresponds to $|-d_0(\mathcal{D}) - d_1(\mathcal{D})|$. In both cases the value simplifies to
  $d_1(\mathcal{D}) + d_0(\mathcal{D})$.
 %
\end{proof}
\begin{corollary}
Let $\mathcal{D}$ be a de Bruijn sequence of order $n$. The discrepancy of $\mathcal{D}$ can be calculated in $\Theta(|\mathcal{D}|)$ time.
\end{corollary}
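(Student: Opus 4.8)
The plan is to reduce the computation of $\disc(\mathcal{D})$ to computing the two quantities $d_0(\mathcal{D})$ and $d_1(\mathcal{D})$, and then to observe that both can be obtained together in a single left-to-right scan of $\mathcal{D}$. By Lemma~\ref{lemma:linearAlgorithm} we have $\disc(\mathcal{D}) = d_0(\mathcal{D}) + d_1(\mathcal{D})$, so it suffices to compute $d_0(\mathcal{D})$ and $d_1(\mathcal{D})$ in $\Theta(|\mathcal{D}|)$ time. The point to emphasize is that evaluating these two maxima directly from their definitions (a maximum of $|u|_1-|u|_0$, respectively $|u|_0-|u|_1$, over all prefixes $u$) would naively cost quadratic time, whereas they can be read off incrementally in one pass.

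First I would reuse the prefix-difference notation from the proof of Lemma~\ref{lemma:linearAlgorithm}: let $\gamma_i$ be the length-$i$ prefix of $\mathcal{D}$ and set $j_i = |\gamma_i|_1 - |\gamma_i|_0$, with $j_0 = 0$. Scanning $\mathcal{D}$ one bit at a time, each $j_i$ is obtained from $j_{i-1}$ in constant time by adding $+1$ on reading a $1$ and $-1$ on reading a $0$. During the scan I maintain the running maximum and running minimum of the values $j_0, j_1, \ldots, j_{2^n}$. Since $d_1(\mathcal{D}) = \max_{w=uv}(|u|_1-|u|_0) = \max_i j_i$ and $d_0(\mathcal{D}) = \max_{w=uv}(|u|_0-|u|_1) = -\min_i j_i$, at the end of the scan these two tracked extrema give exactly $d_1(\mathcal{D})$ and $d_0(\mathcal{D})$, and their sum is $\disc(\mathcal{D})$.

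For the running time, the scan performs a constant number of additions and comparisons per bit over the $|\mathcal{D}| = 2^n$ bits of the sequence, giving an $O(|\mathcal{D}|)$ upper bound; since any algorithm must at least read its entire input, the cost is also $\Omega(|\mathcal{D}|)$, so the total is $\Theta(|\mathcal{D}|)$. I do not anticipate a genuine obstacle here: once Lemma~\ref{lemma:linearAlgorithm} is in hand the result is essentially a bookkeeping argument. The only subtlety worth stating explicitly is that treating $\mathcal{D}$ as a linear string for the scan is justified, since Lemma~\ref{lemma:substring} guarantees that the discrepancy is witnessed by a non-wrapping substring, which is precisely what the reduction to $d_0$ and $d_1$ encodes.
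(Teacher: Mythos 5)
Your proposal is correct and follows exactly the route the paper intends: the corollary is stated as an immediate consequence of Lemma~\ref{lemma:linearAlgorithm}, with the implicit argument being precisely your single left-to-right scan that maintains the running prefix difference and its maximum and minimum, yielding $d_1(\mathcal{D})$ and $d_0(\mathcal{D})$ in one pass. Your added remarks on the $\Omega(|\mathcal{D}|)$ lower bound and on why the reduction to non-wrapping prefixes is justified via Lemma~\ref{lemma:substring} are accurate but not a departure from the paper's approach.
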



\section{Group 1: CCR-based constructions}  \label{sec:CCR}

In this section we consider the four de Bruijn sequence constructions in Group 1 based on the CCR.     The sequences generated by the constructions {\bf CCR1}, {\bf CCR2}, and {\bf CCR3} are based on shift-rules presented in~\cite{binframework}.   The sequences generated by the {\bf CCR2} and {\bf CCR3} constructions can also be constructed by  concatenation approaches~\cite{GS18} described later in this section; the equivalence of the shift-rules to their respective concatenation constructions  has been confirmed up to $n=30$, though no formal proof has been given.  The {\bf Huang} construction is a shift-rule based construction in~\cite{huang}.    Since every de Bruijn sequence of order $n$ contains the substring $0^n$, a lower bound on discrepancy is clearly $n$.  In this section we prove that two aforementioned concatenation based constructions have discrepancy at most $2n$, and thus attain the smallest possible asymptotic discrepancy of $\Theta(n)$.   

To get a better feel for these four de Bruijn sequence constructions, the following graphs illustrate the running difference between the number of 1s and the number of 0s in each prefix of the given de Bruijn sequence.  The examples are for $n=10$, so the de Bruijn sequences have length $2^{10} = 1024$.

\input{tables-CCR.tex}

 Recall that the CCR is a feedback shift register with feedback function $f(a_1a_2\cdots a_n) = a_1 + 1 \pmod{2}$. 
{$\bB(n)$  is partitioned into equivalence classes of strings, called \defo{co-necklaces}, by the orbits of $f$.
For example, the following four columns are the co-necklace equivalence classes for $n=5$:
\begin{center}
\begin{tabular}{c@{\hspace{4em}}c@{\hspace{4em}}c@{\hspace{4em}}c}
\bblue{00000} & \bblue{00010}  & \bblue{00100} & \bblue{01010} \\
00001 & 00101 &01001  &\underline{10101} \\
00011 & 01011 & \underline{10011}  \\
00111 & \underline{10111} & 00110 \\
01111 & 01110 & 01101 \\
\underline{11111} & 11101 &11011 \\
11110 & 11010 & 10110 \\
11100 & 10100  &01100 \\
11000 & 01000  &11001 \\
10000 & 10001 & 10010 \\
\end{tabular}
\end{center}

 The \defo{periodic reduction} of a string $\alpha$, denoted by $\pr(\alpha)$, is the shortest prefix $\beta$ of $\alpha$ such that $\alpha = \beta^j$ for some $j\geq 1$. 
In~\cite{GS18}, the following two de Bruijn sequence constructions {\bf CCR2} and {\bf CCR3} concatenate the periodic reductions of $\alpha\overline{\alpha}$ for  given representatives $\alpha$ of each co-necklace equivalence class.
\begin{boxed2} \noindent
{\bf CCR2} 
\begin{enumerate}
\item Let the representative for each co-necklace equivalence class of order $n$ be its lexicographically smallest string.
\item Let  $\alpha_1, \alpha_2, \ldots , \alpha_m$ denote these representatives in colex order.
\item {\bf Output}: $\pr(\alpha_1\overline{\alpha_1}) \cdot  \pr(\alpha_2\overline{\alpha_2}) \   \cdots  \  \pr(\alpha_m\overline{\alpha_m})$.
\end{enumerate}
\end{boxed2} 
\noindent
For $n=5$,  the representatives for this algorithm are the bolded strings in the equivalence classes above and {\bf CCR2} produces:
\[ 0000011111 \cdot  0010011011 \cdot  0001011101 \cdot  01. \]
%

\begin{boxed2} \noindent
{\bf CCR3} 
\begin{enumerate}
\item Let the representative for each co-necklace equivalence class of order $n$ be the string obtained by taking the lexicographically smallest string, removing its largest prefix of the form $0^j$, and then appending $1^j$ to the end.
\item Let  $\alpha_1, \alpha_2, \ldots , \alpha_m$ denote these representatives in lexicographic order.
\item {\bf Output}: $\pr(\alpha_1\overline{\alpha_1}) \cdot  \pr(\alpha_2\overline{\alpha_2}) \    \cdots \   \pr(\alpha_m\overline{\alpha_m})$.
\end{enumerate}
\end{boxed2} 
\noindent
For $n=5$, the representatives for this algorithm are the underlined strings in the equivalence classes  above and {\bf CCR3} produces:

\[ 1001101100  \cdot 10 \cdot 1011101000 \cdot  1111100000. \]

We now prove that the discrepancy resulting from these two de Bruijn sequence constructions is at most $2n$.

\begin{lemma} \label{lem:dis}
Consider a sequence of binary strings $\alpha_1, \alpha_2, \ldots , \alpha_m$ where each $\alpha_i$ has the same number of $0$s as $1$s and has discrepancy at most $n$.
Then $\disc( \alpha_1\alpha_2 \cdots \alpha_m)\leq 2n$.
\end{lemma}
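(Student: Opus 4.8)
The plan is to track the running difference between the number of $1$s and $0$s along the concatenation and to exploit that this difference is reset to $0$ at every block boundary. Write $W = \alpha_1\alpha_2\cdots\alpha_m$ and, for $0 \le i \le |W|$, let $S(i) = |u|_1 - |u|_0$ where $u$ is the length-$i$ prefix of $W$ (so $S(0)=0$). Because each $\alpha_j$ has the same number of $0$s as $1$s, the value of $S$ at any block boundary $L_k = |\alpha_1|+\cdots+|\alpha_k|$ is $\sum_{j=1}^{k}\big(|\alpha_j|_1-|\alpha_j|_0\big) = 0$. In particular $W$ itself is balanced, so Lemma~\ref{lemma:linearAlgorithm} applies (its proof uses only that the string is balanced, not that it is a de Bruijn sequence) and gives $\disc(W) = d_0(W) + d_1(W)$. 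Since $d_1(W) = \max_i S(i)$ and $d_0(W) = -\min_i S(i)$, it suffices to show $\max_i S(i) \le n$ and $\min_i S(i) \ge -n$.

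First I would bound $S$ inside a single block. For any index $i$ lying in block $k+1$, write $i = L_k + t$ with $0 \le t \le |\alpha_{k+1}|$. Since $S(L_k)=0$, the value $S(i)$ equals $|v|_1 - |v|_0$ for the length-$t$ prefix $v$ of $\alpha_{k+1}$. Every prefix of $\alpha_{k+1}$ is a (circular) substring of $\alpha_{k+1}$, so both $|v|_1-|v|_0$ and $|v|_0-|v|_1$ are at most $\disc(\alpha_{k+1}) \le n$; hence $-n \le S(i) \le n$. As every index of $W$ lies in some block, this gives $\max_i S(i) \le n$ and $\min_i S(i) \ge -n$, and therefore $\disc(W) = d_0(W) + d_1(W) \le 2n$, as claimed. (One could equally bypass Lemma~\ref{lemma:linearAlgorithm}: by Lemma~\ref{lemma:substring} the discrepancy of the balanced string $W$ is witnessed by a non-wrapping substring $y$ with $W = xyz$, so $\disc(W) = \big| |y|_1 - |y|_0 \big| = \big| S(|xy|) - S(|x|) \big| \le \max_i S(i) - \min_i S(i) \le 2n$.)

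I do not anticipate a serious obstacle; the argument is essentially a resetting/telescoping computation. The two points that require care are the observation that the running difference is pinned to $0$ at each block boundary — this is precisely where the balance hypothesis on the individual $\alpha_i$ enters — and the remark that the discrepancy of a single block controls the one-sided deviations $d_0(\alpha_{k+1})$ and $d_1(\alpha_{k+1})$ separately, each by $n$, rather than only their sum. The worst case $2n$ then arises when one block drives $S$ up to $+n$ and a later block drives it down to $-n$, so the bound is tight in the sense that it uses both the positive and negative slack.
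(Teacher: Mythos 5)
Your proof is correct, and it is worth comparing with the paper's, which is shorter and more local: the paper applies Lemma~\ref{lemma:substring} to obtain a non-wrapping witness substring $y$ of $\mathcal{S}=\alpha_1\alpha_2\cdots\alpha_m$, writes $y = u\,\alpha_{i+1}\cdots\alpha_{j-1}\,v$ where $u$ is a suffix of $\alpha_i$ and $v$ is a prefix of $\alpha_j$, cancels the balanced middle blocks so that $|y|_1-|y|_0 = |uv|_1-|uv|_0$, and bounds the contributions of $u$ and $v$ by $n$ each. Your main route instead bounds the entire prefix-sum trajectory: $S$ is pinned to $0$ at every block boundary and stays in $[-n,n]$ inside each block, and Lemma~\ref{lemma:linearAlgorithm} then converts this two-sided bound into $\disc(W)=d_0(W)+d_1(W)\le 2n$. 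These are the same cancellation idea with different bookkeeping; indeed, your parenthetical remark, which bypasses Lemma~\ref{lemma:linearAlgorithm} and evaluates $\disc(W)=\big|S(|xy|)-S(|x|)\big|$ directly on the witness of Lemma~\ref{lemma:substring}, is precisely the paper's argument rewritten in prefix-sum notation. One point where you are more careful than the paper: Lemmas~\ref{lemma:substring} and~\ref{lemma:linearAlgorithm} are stated only for de Bruijn sequences, and the paper applies Lemma~\ref{lemma:substring} to the concatenation $\mathcal{S}$ (which is generally not a de Bruijn sequence) without comment, whereas you explicitly justify the transfer by observing that those proofs use only the balance of the string. Your formulation also makes the tightness of the bound transparent: $2n$ is attained only if one partial block drives the running difference to $+n$ and another drives it to $-n$.
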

\begin{proof}
Let $\mathcal{S} = \alpha_1\alpha_2 \cdots \alpha_m$. By Lemma~\ref{lemma:substring}  there exists a shortest substring $y= u\alpha_{i+1} \alpha_{i+2}\cdots \alpha_{j-1} v$ of $\mathcal{S}$ such that $\disc(\mathcal{S}) = \Big\lvert~ |y|_1-|y|_0~\Big\rvert$ where $u$ is a suffix of $\alpha_i$ and $v$ is a prefix of $\alpha_j$. Since the number of $0$s and $1$s is the same in each $\alpha_k$ and $\disc(\alpha_k) \leq n$,  $\Big\lvert~ |y|_1-|y|_0~\Big\rvert = \Big\lvert~ |uv|_1-|uv|_0~\Big\rvert \leq 2n$.
\end{proof}

\begin{theorem}
The de Bruijn sequences constructed by {\bf CCR2}  and  {\bf CCR3} have discrepancy at most $2n$.
\end{theorem}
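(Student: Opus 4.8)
The plan is to reduce the theorem to Lemma~\ref{lem:dis} by showing that every block appearing in the output of {\bf CCR2} and {\bf CCR3} is balanced and has discrepancy at most $n$. Both constructions produce a concatenation $\pr(\alpha_1\overline{\alpha_1}) \cdots \pr(\alpha_m\overline{\alpha_m})$ in which each representative $\alpha_i$ has length $n$; the two constructions differ only in the choice and ordering of the representatives, neither of which affects the argument. Hence it suffices to prove that for an arbitrary length-$n$ string $\alpha$, the block $\beta = \pr(\alpha\overline{\alpha})$ satisfies $|\beta|_0 = |\beta|_1$ and $\disc(\beta) \le n$. The theorem then follows by applying Lemma~\ref{lem:dis} to the sequence of blocks.

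First I would establish balance. Writing $\gamma = \alpha\overline{\alpha}$, note that if $\alpha$ has $k$ ones then $\overline{\alpha}$ has $n-k$ ones, so $\gamma$ contains exactly $n$ ones and $n$ zeros and is balanced. Since $\gamma = \beta^j$ for some $j \ge 1$, we have $|\gamma|_1 = j\,|\beta|_1$ and $|\gamma|_0 = j\,|\beta|_0$, which forces $|\beta|_1 = |\beta|_0$.

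The crux is bounding $\disc(\gamma) \le n$. Let $p_t = |\alpha_{1\cdots t}|_1 - |\alpha_{1\cdots t}|_0$ for $0 \le t \le n$ denote the prefix-sum walk of $\alpha$. Using that $\overline{\alpha}_{1\cdots t}$ is the complement of $\alpha_{1\cdots t}$, a direct computation shows the prefix-sum walk of $\gamma$ takes exactly the values $p_t$ and $p_n - p_t$ for $0 \le t \le n$. Because $\gamma$ is balanced, the $\#1-\#0$ count of any circular substring of $\gamma$ equals a difference of two walk values (exactly as in the proof of Lemma~\ref{lemma:linearAlgorithm}), so $\disc(\gamma)$ is at most the range of the walk; it therefore suffices to bound $|a-b| \le n$ for any two walk values $a,b$. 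If $a$ and $b$ are both of the form $p_t$, or both of the form $p_n - p_t$, the difference reduces to $|p_s - p_t|$, which is the $\#1-\#0$ count over a substring of $\alpha$ of length $|s-t| \le n$, hence at most $n$ in absolute value. The only genuinely new case is a mixed pair $|p_s - (p_n - p_t)| = |p_s + p_t - p_n|$. Since this expression is symmetric in $s$ and $t$, I may assume $s \le t$ and read it as $|p_s - (p_n - p_t)|$, a difference of the $\#1-\#0$ counts over the \emph{disjoint} blocks $\alpha_{1\cdots s}$ and $\alpha_{t+1\cdots n}$, whose lengths sum to $s + (n-t) \le t + (n-t) = n$; the triangle inequality then gives the bound $n$.

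Finally I would pass from $\gamma$ to its periodic reduction $\beta$. Since $\gamma = \beta^j$, every circular substring of $\beta$ occurs as a substring of $\beta\beta$, and hence as a circular substring of $\gamma$; therefore $\disc(\beta) \le \disc(\gamma) \le n$. Combining balance with this discrepancy bound, Lemma~\ref{lem:dis} applied to the blocks $\beta_1, \ldots, \beta_m$ yields a discrepancy of at most $2n$, as claimed. The main obstacle is the mixed-pair case in the bound on $\disc(\gamma)$; the key idea there is to exploit the symmetry of $p_s + p_t - p_n$ in order to arrange $s \le t$, so that the two contributing blocks of $\alpha$ are disjoint and their total length is at most $n$.
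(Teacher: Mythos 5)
Your proposal is correct and takes essentially the same route as the paper: both reduce the theorem to Lemma~\ref{lem:dis} by showing that each block $\pr(\alpha\overline{\alpha})$ is balanced and has discrepancy at most $n$, then concatenate. The only difference is that the paper asserts $\disc(\alpha\overline{\alpha})\le n$ without proof while you verify it in detail via the prefix-walk analysis (which is sound, including the mixed case); for what it is worth, that bound also follows in one line, since in a balanced circular string of length $2n$ any circular substring of length greater than $n$ has $\#1-\#0$ count equal to minus that of the complementary circular substring, whose length is less than $n$.
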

\begin{proof}
Given a length $n$ binary string $\alpha$,  $\alpha\overline{\alpha}$ has the same number of $0$s and $1$s and has discrepancy at most $n$.  These properties also hold for $\pr(\alpha\overline{\alpha})$ by definition of the periodic reduction. Thus, by Lemma~\ref{lem:dis},  the sequences constructed by {\bf CCR2}  and  {\bf CCR3} have discrepancy at most $2n$.
\end{proof}

Interestingly, from Table~\ref{tab:discreps},  these two concatenation-based constructions  do not demonstrate the smallest discrepancy for $n \leq 30$.  The construction by Huang~\cite{huang}, which is based on a cycle-joining approach,  demonstrates slightly smaller discrepancy.  In particular the author states:
\begin{quote} \emph{``It seems clear that the sequences produced by
our algorithm have a relatively good characteristic of local 0-1 balance in
comparison with the ones produced by the `prefer one' algorithm.''}
\end{quote}
So the author indicates that their construction may have small discrepancy, however no analysis is provided.  

\normalsize

\section{Group 2: PRR-based constructions}  \label{sec:same}

In this section we consider the three de Bruijn sequence constructions in Group 2 based on the PRR.  The {\bf Pref-same}~\cite{eldert,fred-nfsr,simple_same} and the {\bf Pref-opposite}~\cite{pref-opposite} are greedy constructions based on the last bit of the sequence as it is constructed.  They have the downside of requiring an exponential amount of memory.  The {\bf Lex-comp} construction is obtained by concatenating lexicographic compositions. It was an attempt to generate the sequence generated by the {\bf Pref-same} construction without using exponential space and it the resulting sequences were conjectured to be the same for a very long prefix~\cite{lex-comp}.  In fact, it attains the same discrepancy as the {\bf Pref-same} for all values of $n$ tested. 
Recently, it was demonstrated that the {\bf Pref-same} and the {\bf Pref-opposite} sequences can be generated in $O(n)$ time per bit using only $O(n)$ space by applying the PRR~\cite{sala}.  There is also a PRR-based construction that produces an equivalent sequence as {\bf Lex-comp} for large $n$, but there is no formal proof showing they are equivalent.  

To get a better feel for the two greedy de Bruijn sequence constructions, the following graphs illustrate the running difference between the number of 1s and the number of 0s in each prefix of the given de Bruijn sequence.  The examples are for $n=10$, so the de Bruijn sequences have length $2^{10} = 1024$.   

\begin{center}

\begin{tikzpicture} \footnotesize
\begin{axis}[
    width = 3.0in,
    xlabel={Pref-same sequence for $n=10$},
    ylabel={\# 1s $-$ \# 0s in prefix},
    xmin=0, xmax=1024,
    ymin=-12, ymax=22,
    ytick pos=bottom,
    legend pos=north east,
    ymajorgrids=true,
    grid style=dashed,
]

          \addplot[
    color=blue, 
    ]
    coordinates {
(0,0)(1,1)(2,2)(3,3)(4,4)(5,5)(6,6)(7,7)(8,8)(9,9)(10,10)(11,9)(12,8)(13,7)(14,6)(15,5)(16,4)(17,3)(18,2)(19,1)(20,0)(21,1)(22,2)(23,3)(24,4)(25,5)(26,6)(27,7)(28,8)(29,7)(30,8)(31,9)(32,10)(33,11)(34,12)(35,13)(36,14)(37,13)(38,12)(39,13)(40,14)(41,15)(42,16)(43,17)(44,18)(45,19)(46,18)(47,19)(48,18)(49,17)(50,16)(51,15)(52,14)(53,13)(54,12)(55,11)(56,12)(57,11)(58,10)(59,9)(60,8)(61,7)(62,6)(63,5)(64,6)(65,7)(66,6)(67,5)(68,4)(69,3)(70,2)(71,1)(72,0)(73,1)(74,0)(75,1)(76,2)(77,3)(78,4)(79,5)(80,6)(81,5)(82,4)(83,3)(84,4)(85,5)(86,6)(87,7)(88,8)(89,9)(90,8)(91,7)(92,8)(93,7)(94,6)(95,5)(96,4)(97,3)(98,2)(99,3)(100,4)(101,5)(102,4)(103,3)(104,2)(105,1)(106,0)(107,-1)(108,0)(109,1)(110,0)(111,1)(112,2)(113,3)(114,4)(115,5)(116,6)(117,5)(118,6)(119,7)(120,6)(121,5)(122,4)(123,3)(124,2)(125,1)(126,2)(127,1)(128,0)(129,1)(130,2)(131,3)(132,4)(133,5)(134,6)(135,5)(136,6)(137,5)(138,6)(139,7)(140,8)(141,9)(142,10)(143,9)(144,8)(145,7)(146,6)(147,7)(148,8)(149,9)(150,10)(151,11)(152,10)(153,9)(154,8)(155,9)(156,8)(157,7)(158,6)(159,5)(160,4)(161,5)(162,6)(163,7)(164,8)(165,7)(166,6)(167,5)(168,4)(169,3)(170,4)(171,5)(172,6)(173,5)(174,6)(175,7)(176,8)(177,9)(178,10)(179,9)(180,8)(181,9)(182,10)(183,9)(184,8)(185,7)(186,6)(187,5)(188,6)(189,7)(190,6)(191,5)(192,6)(193,7)(194,8)(195,9)(196,10)(197,9)(198,8)(199,9)(200,8)(201,9)(202,10)(203,11)(204,12)(205,13)(206,12)(207,13)(208,14)(209,15)(210,14)(211,13)(212,12)(213,11)(214,10)(215,11)(216,10)(217,9)(218,8)(219,9)(220,10)(221,11)(222,12)(223,13)(224,12)(225,13)(226,14)(227,13)(228,14)(229,15)(230,16)(231,17)(232,18)(233,17)(234,18)(235,17)(236,16)(237,17)(238,18)(239,19)(240,20)(241,21)(242,20)(243,21)(244,20)(245,21)(246,20)(247,19)(248,18)(249,17)(250,16)(251,15)(252,16)(253,15)(254,16)(255,15)(256,14)(257,13)(258,12)(259,11)(260,12)(261,13)(262,12)(263,13)(264,12)(265,11)(266,10)(267,9)(268,8)(269,9)(270,8)(271,7)(272,8)(273,7)(274,6)(275,5)(276,4)(277,3)(278,4)(279,3)(280,4)(281,5)(282,4)(283,3)(284,2)(285,1)(286,0)(287,1)(288,0)(289,1)(290,0)(291,1)(292,2)(293,3)(294,4)(295,3)(296,2)(297,1)(298,0)(299,1)(300,0)(301,-1)(302,-2)(303,-3)(304,-2)(305,-1)(306,0)(307,1)(308,0)(309,1)(310,2)(311,3)(312,4)(313,3)(314,2)(315,1)(316,2)(317,3)(318,2)(319,1)(320,0)(321,-1)(322,0)(323,1)(324,2)(325,1)(326,0)(327,1)(328,2)(329,3)(330,4)(331,3)(332,2)(333,1)(334,2)(335,1)(336,2)(337,3)(338,4)(339,5)(340,4)(341,3)(342,4)(343,5)(344,6)(345,5)(346,4)(347,3)(348,2)(349,3)(350,4)(351,3)(352,2)(353,1)(354,2)(355,3)(356,4)(357,5)(358,4)(359,3)(360,4)(361,5)(362,4)(363,5)(364,6)(365,7)(366,8)(367,7)(368,6)(369,7)(370,6)(371,5)(372,6)(373,7)(374,8)(375,9)(376,8)(377,7)(378,8)(379,7)(380,8)(381,7)(382,6)(383,5)(384,4)(385,5)(386,6)(387,7)(388,6)(389,7)(390,6)(391,5)(392,4)(393,3)(394,4)(395,5)(396,4)(397,3)(398,4)(399,3)(400,2)(401,1)(402,0)(403,1)(404,2)(405,1)(406,2)(407,3)(408,2)(409,1)(410,0)(411,-1)(412,0)(413,1)(414,0)(415,1)(416,0)(417,1)(418,2)(419,3)(420,4)(421,3)(422,4)(423,5)(424,6)(425,5)(426,6)(427,7)(428,8)(429,9)(430,8)(431,9)(432,10)(433,9)(434,8)(435,9)(436,10)(437,11)(438,12)(439,11)(440,12)(441,13)(442,12)(443,13)(444,12)(445,11)(446,10)(447,9)(448,10)(449,9)(450,8)(451,7)(452,8)(453,7)(454,6)(455,5)(456,4)(457,5)(458,4)(459,3)(460,4)(461,5)(462,4)(463,3)(464,2)(465,1)(466,2)(467,1)(468,0)(469,1)(470,0)(471,1)(472,2)(473,3)(474,4)(475,3)(476,4)(477,3)(478,2)(479,1)(480,2)(481,3)(482,4)(483,5)(484,4)(485,5)(486,4)(487,3)(488,4)(489,3)(490,2)(491,1)(492,0)(493,1)(494,0)(495,1)(496,2)(497,3)(498,2)(499,1)(500,0)(501,-1)(502,0)(503,-1)(504,0)(505,1)(506,0)(507,1)(508,2)(509,3)(510,4)(511,3)(512,4)(513,3)(514,4)(515,5)(516,4)(517,3)(518,2)(519,1)(520,2)(521,1)(522,2)(523,1)(524,0)(525,1)(526,2)(527,3)(528,4)(529,3)(530,4)(531,3)(532,4)(533,3)(534,4)(535,5)(536,6)(537,5)(538,4)(539,3)(540,4)(541,5)(542,6)(543,5)(544,4)(545,3)(546,4)(547,5)(548,4)(549,5)(550,6)(551,7)(552,6)(553,5)(554,4)(555,5)(556,4)(557,3)(558,4)(559,5)(560,6)(561,5)(562,4)(563,3)(564,4)(565,3)(566,4)(567,3)(568,2)(569,1)(570,2)(571,3)(572,4)(573,3)(574,2)(575,3)(576,2)(577,1)(578,0)(579,1)(580,2)(581,3)(582,2)(583,3)(584,4)(585,3)(586,2)(587,1)(588,2)(589,3)(590,4)(591,3)(592,4)(593,3)(594,4)(595,5)(596,6)(597,5)(598,4)(599,5)(600,6)(601,7)(602,6)(603,7)(604,8)(605,9)(606,8)(607,7)(608,8)(609,9)(610,8)(611,7)(612,8)(613,9)(614,10)(615,9)(616,8)(617,9)(618,10)(619,9)(620,10)(621,9)(622,8)(623,7)(624,8)(625,9)(626,8)(627,7)(628,6)(629,7)(630,6)(631,5)(632,4)(633,5)(634,6)(635,5)(636,4)(637,5)(638,6)(639,5)(640,4)(641,3)(642,4)(643,5)(644,4)(645,3)(646,4)(647,3)(648,4)(649,5)(650,6)(651,5)(652,4)(653,5)(654,4)(655,3)(656,4)(657,3)(658,2)(659,1)(660,2)(661,3)(662,2)(663,3)(664,4)(665,3)(666,4)(667,5)(668,6)(669,5)(670,4)(671,5)(672,4)(673,5)(674,6)(675,5)(676,4)(677,3)(678,4)(679,5)(680,4)(681,5)(682,4)(683,3)(684,4)(685,5)(686,6)(687,5)(688,4)(689,5)(690,4)(691,5)(692,4)(693,5)(694,6)(695,7)(696,6)(697,7)(698,8)(699,9)(700,8)(701,9)(702,8)(703,7)(704,6)(705,7)(706,6)(707,5)(708,4)(709,5)(710,4)(711,5)(712,6)(713,7)(714,6)(715,7)(716,8)(717,7)(718,6)(719,7)(720,6)(721,5)(722,4)(723,5)(724,4)(725,3)(726,4)(727,5)(728,4)(729,5)(730,6)(731,7)(732,6)(733,7)(734,8)(735,7)(736,8)(737,9)(738,8)(739,7)(740,6)(741,7)(742,6)(743,5)(744,6)(745,5)(746,4)(747,5)(748,6)(749,7)(750,6)(751,7)(752,8)(753,7)(754,8)(755,7)(756,8)(757,9)(758,10)(759,9)(760,10)(761,9)(762,8)(763,9)(764,10)(765,9)(766,8)(767,7)(768,8)(769,7)(770,8)(771,9)(772,8)(773,7)(774,8)(775,9)(776,10)(777,9)(778,10)(779,9)(780,8)(781,9)(782,8)(783,9)(784,10)(785,11)(786,10)(787,11)(788,10)(789,11)(790,12)(791,11)(792,12)(793,13)(794,14)(795,13)(796,14)(797,13)(798,14)(799,13)(800,12)(801,13)(802,14)(803,15)(804,14)(805,15)(806,14)(807,15)(808,14)(809,15)(810,14)(811,13)(812,12)(813,11)(814,12)(815,11)(816,12)(817,11)(818,12)(819,11)(820,10)(821,9)(822,10)(823,11)(824,10)(825,11)(826,10)(827,11)(828,10)(829,9)(830,8)(831,9)(832,8)(833,7)(834,8)(835,7)(836,8)(837,7)(838,6)(839,5)(840,6)(841,5)(842,6)(843,7)(844,6)(845,7)(846,6)(847,5)(848,4)(849,5)(850,4)(851,5)(852,4)(853,3)(854,4)(855,3)(856,2)(857,1)(858,2)(859,1)(860,2)(861,1)(862,2)(863,3)(864,2)(865,1)(866,0)(867,1)(868,0)(869,1)(870,0)(871,1)(872,0)(873,1)(874,2)(875,1)(876,0)(877,1)(878,2)(879,1)(880,0)(881,1)(882,0)(883,-1)(884,0)(885,1)(886,0)(887,-1)(888,0)(889,1)(890,0)(891,1)(892,2)(893,1)(894,0)(895,1)(896,2)(897,1)(898,2)(899,1)(900,2)(901,3)(902,2)(903,1)(904,2)(905,1)(906,0)(907,1)(908,0)(909,1)(910,2)(911,1)(912,0)(913,1)(914,0)(915,1)(916,2)(917,1)(918,2)(919,3)(920,2)(921,1)(922,2)(923,1)(924,2)(925,1)(926,0)(927,1)(928,2)(929,1)(930,0)(931,1)(932,0)(933,1)(934,0)(935,1)(936,0)(937,-1)(938,0)(939,1)(940,0)(941,1)(942,2)(943,1)(944,2)(945,1)(946,0)(947,1)(948,2)(949,1)(950,2)(951,1)(952,0)(953,1)(954,0)(955,-1)(956,0)(957,1)(958,0)(959,1)(960,0)(961,1)(962,0)(963,1)(964,2)(965,1)(966,2)(967,3)(968,2)(969,3)(970,4)(971,3)(972,4)(973,3)(974,4)(975,3)(976,2)(977,3)(978,2)(979,1)(980,2)(981,1)(982,0)(983,1)(984,0)(985,1)(986,0)(987,1)(988,2)(989,1)(990,2)(991,1)(992,0)(993,1)(994,0)(995,1)(996,0)(997,-1)(998,0)(999,-1)(1000,0)(1001,1)(1002,0)(1003,1)(1004,0)(1005,1)(1006,2)(1007,1)(1008,2)(1009,1)(1010,2)(1011,1)(1012,2)(1013,1)(1014,2)(1015,1)(1016,0)(1017,1)(1018,0)(1019,1)(1020,0)(1021,1)(1022,0)(1023,1)(1024,0)
    };
 %
 \end{axis}
\end{tikzpicture}
%
\ \ \ 
\begin{tikzpicture} \footnotesize
\begin{axis}[
    width = 3.0in,
    xlabel={Pref-opp sequence for $n=10$},
    ylabel={\# 1s $-$ \# 0s in prefix},
    xmin=0, xmax=1024,
    ymin=-12, ymax=22,
    ytick pos=bottom,
    legend pos=north east,
    ymajorgrids=true,
    grid style=dashed,
]

      \addplot[
    color=red,
    ]
    coordinates {
(0,0)(1,1)(2,0)(3,1)(4,0)(5,1)(6,0)(7,1)(8,0)(9,1)(10,0)(11,1)(12,2)(13,1)(14,2)(15,1)(16,2)(17,1)(18,2)(19,1)(20,0)(21,1)(22,0)(23,1)(24,0)(25,1)(26,0)(27,1)(28,0)(29,-1)(30,-2)(31,-1)(32,-2)(33,-1)(34,-2)(35,-1)(36,-2)(37,-1)(38,0)(39,1)(40,0)(41,1)(42,0)(43,1)(44,0)(45,1)(46,2)(47,1)(48,0)(49,1)(50,0)(51,1)(52,0)(53,1)(54,0)(55,-1)(56,0)(57,1)(58,0)(59,1)(60,0)(61,1)(62,0)(63,1)(64,2)(65,3)(66,4)(67,3)(68,4)(69,3)(70,4)(71,3)(72,2)(73,3)(74,2)(75,1)(76,2)(77,1)(78,2)(79,1)(80,2)(81,3)(82,2)(83,3)(84,4)(85,3)(86,4)(87,3)(88,4)(89,3)(90,2)(91,3)(92,4)(93,5)(94,4)(95,5)(96,4)(97,5)(98,4)(99,3)(100,2)(101,3)(102,4)(103,3)(104,4)(105,3)(106,4)(107,3)(108,2)(109,1)(110,0)(111,1)(112,0)(113,1)(114,0)(115,1)(116,0)(117,-1)(118,-2)(119,-3)(120,-4)(121,-3)(122,-4)(123,-3)(124,-4)(125,-3)(126,-2)(127,-3)(128,-4)(129,-5)(130,-4)(131,-5)(132,-4)(133,-5)(134,-4)(135,-3)(136,-2)(137,-3)(138,-4)(139,-3)(140,-4)(141,-3)(142,-4)(143,-3)(144,-2)(145,-1)(146,0)(147,1)(148,0)(149,1)(150,0)(151,1)(152,2)(153,1)(154,2)(155,1)(156,0)(157,1)(158,0)(159,1)(160,0)(161,-1)(162,0)(163,-1)(164,0)(165,1)(166,0)(167,1)(168,0)(169,1)(170,2)(171,1)(172,2)(173,3)(174,4)(175,3)(176,4)(177,3)(178,4)(179,5)(180,4)(181,3)(182,4)(183,5)(184,4)(185,5)(186,4)(187,5)(188,6)(189,5)(190,4)(191,3)(192,2)(193,3)(194,2)(195,3)(196,2)(197,1)(198,2)(199,1)(200,0)(201,-1)(202,0)(203,-1)(204,0)(205,-1)(206,-2)(207,-1)(208,0)(209,-1)(210,-2)(211,-1)(212,-2)(213,-1)(214,-2)(215,-3)(216,-2)(217,-1)(218,0)(219,1)(220,0)(221,1)(222,0)(223,1)(224,2)(225,3)(226,2)(227,3)(228,4)(229,3)(230,4)(231,3)(232,4)(233,5)(234,6)(235,5)(236,4)(237,3)(238,4)(239,3)(240,4)(241,3)(242,2)(243,1)(244,2)(245,1)(246,0)(247,1)(248,0)(249,1)(250,0)(251,-1)(252,-2)(253,-1)(254,0)(255,1)(256,0)(257,1)(258,0)(259,1)(260,2)(261,3)(262,4)(263,3)(264,2)(265,3)(266,2)(267,3)(268,2)(269,1)(270,0)(271,-1)(272,0)(273,1)(274,0)(275,1)(276,0)(277,1)(278,2)(279,3)(280,4)(281,5)(282,6)(283,5)(284,6)(285,5)(286,4)(287,5)(288,4)(289,5)(290,6)(291,7)(292,6)(293,7)(294,6)(295,5)(296,6)(297,5)(298,4)(299,5)(300,6)(301,5)(302,6)(303,5)(304,4)(305,5)(306,4)(307,3)(308,2)(309,1)(310,2)(311,1)(312,2)(313,3)(314,2)(315,3)(316,2)(317,1)(318,0)(319,1)(320,0)(321,1)(322,2)(323,1)(324,2)(325,3)(326,2)(327,1)(328,2)(329,1)(330,2)(331,3)(332,2)(333,3)(334,4)(335,5)(336,6)(337,5)(338,6)(339,5)(340,4)(341,5)(342,6)(343,5)(344,6)(345,7)(346,6)(347,7)(348,6)(349,5)(350,6)(351,7)(352,6)(353,5)(354,4)(355,5)(356,4)(357,5)(358,6)(359,5)(360,4)(361,5)(362,4)(363,3)(364,4)(365,3)(366,4)(367,5)(368,4)(369,3)(370,4)(371,5)(372,6)(373,5)(374,6)(375,5)(376,4)(377,5)(378,6)(379,7)(380,6)(381,5)(382,6)(383,5)(384,6)(385,7)(386,6)(387,5)(388,4)(389,5)(390,6)(391,5)(392,6)(393,5)(394,4)(395,5)(396,6)(397,7)(398,8)(399,9)(400,8)(401,9)(402,8)(403,7)(404,6)(405,7)(406,6)(407,5)(408,4)(409,5)(410,4)(411,5)(412,6)(413,7)(414,6)(415,7)(416,8)(417,9)(418,8)(419,9)(420,8)(421,7)(422,6)(423,7)(424,8)(425,7)(426,6)(427,7)(428,6)(429,7)(430,8)(431,9)(432,8)(433,7)(434,8)(435,9)(436,8)(437,9)(438,8)(439,7)(440,6)(441,7)(442,8)(443,9)(444,10)(445,9)(446,10)(447,9)(448,8)(449,7)(450,6)(451,7)(452,6)(453,5)(454,6)(455,5)(456,6)(457,7)(458,8)(459,9)(460,8)(461,9)(462,10)(463,9)(464,10)(465,9)(466,8)(467,7)(468,6)(469,7)(470,8)(471,9)(472,8)(473,9)(474,8)(475,7)(476,6)(477,5)(478,4)(479,5)(480,6)(481,5)(482,6)(483,5)(484,4)(485,3)(486,2)(487,1)(488,0)(489,1)(490,0)(491,1)(492,0)(493,-1)(494,-2)(495,-3)(496,-4)(497,-5)(498,-6)(499,-5)(500,-6)(501,-5)(502,-4)(503,-5)(504,-6)(505,-7)(506,-8)(507,-9)(508,-8)(509,-9)(510,-8)(511,-7)(512,-6)(513,-7)(514,-8)(515,-9)(516,-10)(517,-9)(518,-10)(519,-9)(520,-8)(521,-7)(522,-6)(523,-7)(524,-8)(525,-9)(526,-8)(527,-9)(528,-8)(529,-7)(530,-6)(531,-5)(532,-4)(533,-5)(534,-6)(535,-5)(536,-6)(537,-5)(538,-4)(539,-3)(540,-2)(541,-1)(542,0)(543,1)(544,0)(545,1)(546,2)(547,1)(548,2)(549,3)(550,2)(551,3)(552,4)(553,3)(554,2)(555,1)(556,2)(557,1)(558,0)(559,1)(560,0)(561,-1)(562,0)(563,-1)(564,-2)(565,-1)(566,0)(567,1)(568,0)(569,1)(570,2)(571,1)(572,2)(573,3)(574,4)(575,3)(576,2)(577,3)(578,2)(579,1)(580,2)(581,1)(582,0)(583,-1)(584,0)(585,1)(586,0)(587,1)(588,2)(589,1)(590,2)(591,3)(592,4)(593,5)(594,6)(595,5)(596,6)(597,7)(598,6)(599,5)(600,6)(601,5)(602,4)(603,3)(604,4)(605,3)(606,2)(607,3)(608,4)(609,3)(610,4)(611,5)(612,6)(613,5)(614,6)(615,7)(616,6)(617,5)(618,6)(619,7)(620,6)(621,5)(622,6)(623,5)(624,4)(625,5)(626,6)(627,5)(628,4)(629,5)(630,6)(631,5)(632,6)(633,7)(634,6)(635,5)(636,6)(637,7)(638,8)(639,9)(640,8)(641,9)(642,10)(643,9)(644,8)(645,7)(646,8)(647,9)(648,10)(649,9)(650,10)(651,11)(652,10)(653,9)(654,8)(655,7)(656,8)(657,9)(658,8)(659,9)(660,10)(661,9)(662,8)(663,7)(664,6)(665,5)(666,4)(667,5)(668,4)(669,3)(670,4)(671,3)(672,2)(673,1)(674,0)(675,-1)(676,0)(677,-1)(678,-2)(679,-1)(680,0)(681,-1)(682,-2)(683,-3)(684,-4)(685,-3)(686,-4)(687,-5)(688,-4)(689,-3)(690,-2)(691,-3)(692,-4)(693,-5)(694,-4)(695,-5)(696,-6)(697,-5)(698,-4)(699,-3)(700,-2)(701,-3)(702,-4)(703,-3)(704,-4)(705,-5)(706,-4)(707,-3)(708,-2)(709,-1)(710,0)(711,1)(712,0)(713,1)(714,2)(715,3)(716,2)(717,3)(718,4)(719,5)(720,6)(721,5)(722,6)(723,7)(724,8)(725,7)(726,6)(727,7)(728,8)(729,9)(730,8)(731,9)(732,10)(733,11)(734,10)(735,9)(736,8)(737,9)(738,10)(739,9)(740,10)(741,11)(742,12)(743,11)(744,10)(745,9)(746,8)(747,7)(748,8)(749,7)(750,6)(751,5)(752,6)(753,5)(754,4)(755,3)(756,2)(757,3)(758,2)(759,1)(760,0)(761,1)(762,2)(763,1)(764,0)(765,-1)(766,0)(767,-1)(768,-2)(769,-3)(770,-2)(771,-1)(772,0)(773,-1)(774,-2)(775,-1)(776,-2)(777,-3)(778,-4)(779,-3)(780,-2)(781,-1)(782,0)(783,1)(784,0)(785,1)(786,2)(787,3)(788,4)(789,3)(790,2)(791,3)(792,4)(793,3)(794,4)(795,5)(796,6)(797,7)(798,6)(799,5)(800,4)(801,3)(802,4)(803,3)(804,2)(805,1)(806,0)(807,1)(808,2)(809,1)(810,0)(811,1)(812,0)(813,-1)(814,-2)(815,-3)(816,-2)(817,-1)(818,0)(819,1)(820,0)(821,1)(822,2)(823,3)(824,4)(825,5)(826,4)(827,3)(828,2)(829,3)(830,2)(831,1)(832,0)(833,-1)(834,-2)(835,-1)(836,0)(837,1)(838,0)(839,1)(840,2)(841,3)(842,4)(843,5)(844,6)(845,5)(846,4)(847,5)(848,4)(849,3)(850,2)(851,1)(852,0)(853,-1)(854,0)(855,1)(856,0)(857,1)(858,2)(859,3)(860,4)(861,5)(862,6)(863,7)(864,8)(865,7)(866,6)(867,7)(868,8)(869,7)(870,6)(871,7)(872,8)(873,9)(874,8)(875,7)(876,8)(877,9)(878,8)(879,7)(880,6)(881,7)(882,8)(883,7)(884,6)(885,7)(886,8)(887,7)(888,6)(889,5)(890,4)(891,3)(892,4)(893,5)(894,4)(895,3)(896,4)(897,5)(898,6)(899,7)(900,8)(901,7)(902,6)(903,7)(904,8)(905,9)(906,8)(907,7)(908,6)(909,5)(910,6)(911,7)(912,6)(913,5)(914,4)(915,5)(916,6)(917,7)(918,8)(919,7)(920,6)(921,7)(922,8)(923,9)(924,10)(925,9)(926,8)(927,7)(928,8)(929,9)(930,8)(931,7)(932,6)(933,5)(934,6)(935,7)(936,8)(937,7)(938,6)(939,7)(940,8)(941,9)(942,10)(943,11)(944,12)(945,13)(946,12)(947,11)(948,10)(949,11)(950,12)(951,13)(952,12)(953,11)(954,10)(955,11)(956,12)(957,13)(958,14)(959,15)(960,16)(961,15)(962,14)(963,13)(964,12)(965,13)(966,14)(967,15)(968,16)(969,17)(970,16)(971,15)(972,14)(973,13)(974,12)(975,13)(976,14)(977,15)(978,16)(979,15)(980,14)(981,13)(982,12)(983,11)(984,10)(985,11)(986,12)(987,13)(988,12)(989,11)(990,10)(991,9)(992,8)(993,7)(994,6)(995,7)(996,8)(997,7)(998,6)(999,5)(1000,4)(1001,3)(1002,2)(1003,1)(1004,0)(1005,1)(1006,0)(1007,-1)(1008,-2)(1009,-3)(1010,-4)(1011,-5)(1012,-6)(1013,-7)(1014,-8)(1015,-9)(1016,-8)(1017,-7)(1018,-6)(1019,-5)(1020,-4)(1021,-3)(1022,-2)(1023,-1)(1024,0)
    };
    
\end{axis}
\end{tikzpicture}

\end{center}


In the following table we study some experimental results for the {\bf Pref-same} construction.   In particular,  for $10 \leq n \leq 25$ we compute the maximum difference between the number of $1$s and the number of $0$s along with the maximum difference between the number $0$s and the number of $1$s,  over all prefixes of each {\bf Pref-same} de Bruijn sequence of order $n$.  Adding these two values together, we get the discrepancies shown in Table~\ref{tab:discreps}.

\medskip

\noindent \small
\begin{tabular}{c| rrrrrrrrrrrrrrrr} 
  $n$ 			&  10 & 11 & 12 & 13 & 14 & 15 & 16 & 17 & 18 & 19 & 20 & 21 & 22 & 23 & 24 & 25 \\ \hline 
  $max$(\#1s $-$ \#0s) 	& 21 &  26 & 31  & 36  & 43  & 50  & 57  & 64  & 73  & 82  & 91  & 100  & 111  & 122  & 133  & 144   \\  
  $max$(\#0s $-$ \#1s) 	& 3  & 3   &  4   & 7   & 5   & 9   & 11   & 15   & 15   & 21   & 23   & 27   & 31   & 33   & 39   & 43  \\ \hline
  discrepancy    			& 24 & 29 & 35 & 43 & 48 & 59 & 68 & 79 & 88 & 103 & 114 & 127 & 142 & 155 & 172 & 187 
\end{tabular} \normalsize

\medskip

\noindent
Interestingly, the values in the row $max$(\#$1$s $-$ \#$0$s) are equivalent to the known sequence A008811 in the Online Encyclopedia of Integer Sequences (OEIS)~\cite{OEIS} offset by four positions.   The sequence  enumerates the ``Expansion of $x(1+x^4)/((1-x)^2(1-x^4))$'' and the provided formula demonstrates that each value is $\Theta(n^2)$.
More specifically the values  match  the sequence for $6 \leq n \leq 30$, though we have no intuition as to why this is the case. This leads to the following conjecture.

\begin{conjecture} \label{con:same}
The de Bruijn sequences constructed by the {\bf Pref-same} and {\bf Lex-comp} algorithms have discrepancy $\Theta(n^2)$.
\end{conjecture}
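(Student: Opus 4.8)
The plan is to invoke Lemma~\ref{lemma:linearAlgorithm} to replace the circular discrepancy by the prefix quantity $\disc(\mathcal{D}) = d_0(\mathcal{D}) + d_1(\mathcal{D})$, and then bound each term. The experimental table already suggests how the growth splits: $d_1$ (the maximum excess of $1$s over $0$s in a prefix) carries the quadratic behaviour, while $d_0$ is far smaller. So the target reduces to the two statements $d_1(\mathcal{D}) \in \Omega(n^2)$, which forces the discrepancy to be large, together with $d_0(\mathcal{D}),\, d_1(\mathcal{D}) \in O(n^2)$, which caps it. I would attack the {\bf Lex-comp} sequence first, since it is defined by an explicit concatenation and is therefore amenable to a block-by-block analysis, and deal with {\bf Pref-same} separately afterwards.

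The central device is to track the running count $j_i = |\gamma_i|_1 - |\gamma_i|_0$ (in the notation of Lemma~\ref{lemma:linearAlgorithm}) across the concatenation boundaries of {\bf Lex-comp}. Writing the sequence as a concatenation of composition-derived blocks $B_1 B_2 \cdots$, I would decompose $j_i$ into a \emph{macroscopic drift} --- the value of $j_i$ at the block boundaries, i.e.\ the partial sums $\sum_{t \le s}(|B_t|_1 - |B_t|_0)$ --- plus a \emph{microscopic fluctuation} inside the current block, the latter being $O(n)$ because each block has length $O(n)$. Since the fluctuation is lower order, the discrepancy is governed by the extreme values of the drift, and the entire problem is recast as estimating partial sums of the per-block imbalances taken in lexicographic order.

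For the lower bound I would isolate the prefix ending at the peak of the drift (for $n=10$ this is the prefix attaining $j_i = 21$ near position $240$) and evaluate the partial sum there directly. Because the blocks coming from $1$-heavy run structures are front-loaded in the chosen order, their imbalances accumulate rather than cancel, and I expect the sum to telescope into the quasi-polynomial with generating function $x(1+x^4)/((1-x)^2(1-x^4))$; any bound of the form $cn^2$ already gives $\Omega(n^2)$. For the upper bound I would show the same partial sums never exceed $O(n^2)$ in absolute value by grouping the compositions into $O(n)$ weight classes, bounding the net contribution of each class by $O(n)$, and controlling $d_0$ by the symmetric argument on the rare $0$-heavy excursions.

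The hard part is twofold. First, the conjecture concerns \emph{both} constructions, yet the equality (or long shared prefix) of {\bf Pref-same} and {\bf Lex-comp} is itself only conjectured; so one must either prove that equivalence and transfer the bound, or analyze {\bf Pref-same} independently through its PRR cycle-joining description, re-deriving the block structure from the run-length register. Second, and more seriously, obtaining a \emph{provable} two-sided handle on the cumulative drift of the lexicographic block imbalances is exactly the step the authors flag as mysterious; matching the precise A008811 formula is surely unnecessary for a $\Theta(n^2)$ conclusion, but even a crude $\Theta(n^2)$ estimate requires understanding how the run-length compositions redistribute $1$s and $0$s as the lexicographic order advances, and this redistribution is the genuine obstacle.
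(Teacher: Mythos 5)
This statement is Conjecture~\ref{con:same}: the paper does not prove it. The only support the authors give is computational --- for $10 \le n \le 30$ the prefix maxima $\max(\#1\text{s}-\#0\text{s})$ happen to coincide with the OEIS sequence A008811, whose closed form is $\Theta(n^2)$ --- and they explicitly state they ``have no intuition as to why this is the case,'' listing a resolution of the conjecture as an open problem in Section~\ref{sec:fut}. So there is no paper proof to measure your proposal against; the question is whether your argument closes the gap on its own, and it does not.

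Your reduction via Lemma~\ref{lemma:linearAlgorithm} to bounding $d_0(\mathcal{D})+d_1(\mathcal{D})$, and the split of the running count $j_i$ into block-boundary drift plus an $O(n)$ in-block fluctuation, are sound framing. But both substantive steps are asserted rather than proved. For the lower bound, the claim that the $1$-heavy blocks early in lexicographic order ``accumulate rather than cancel'' into a $\Omega(n^2)$ partial sum is precisely the content of the conjecture; no mechanism is given for why cancellation cannot occur, and ``I expect the sum to telescope'' is not an argument. For the upper bound, the plan to group compositions into $O(n)$ weight classes each contributing $O(n)$ net imbalance is a hypothesis with no supporting estimate --- it is not evident that a single class contributes only $O(n)$, and nothing in the proposal rules out a class contributing $\Theta(n^2)$ by itself or the classes aligning in sign to exceed $O(n^2)$. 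Finally, even granting all of this for {\bf Lex-comp}, the transfer to {\bf Pref-same} fails: their equality is itself only conjectural (the paper conjectures a shared long prefix, not equality), and discrepancy is a global quantity, so agreement on a prefix cannot by itself bound $d_0$ and $d_1$ of the other sequence. You correctly identify both obstacles yourself in your final paragraph; what remains is a research plan whose hard steps are exactly the open problem, not a proof.
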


A similar analysis was performed for sequences generated by the {\bf Pref-opposite} construction.

\medskip

\noindent  \small
\begin{tabular}{c| rrrrrrrrrrrrrrrr} 
  $n$ 			&  10 & 11 & 12 & 13 & 14 & 15 & 16 & 17 & 18 & 19 & 20 & 21 & 22 & 23 & 24 & 25 \\ \hline 
  $max$(\#1s $-$ \#0s) 	&  10  & 13   & 17   & 21   & 26   & 31   & 37   & 43   & 50   & 57   &  65   & 73   &  82   &  91   & 101   & 111    \\ 
   $max$(\#0s $-$ \#1s) 	&  17   & 21   & 26   & 31   & 37   & 43   & 50   & 57   &  65   & 73   &  82   &  91   & 101   & 111 & 122 & 133     \\ \hline
   discrepancy           &  27 & 34 & 43 & 52 & 63 & 74 & 87 & 100 & 115 & 130 & 147 & 164& 183 & 202& 223 & 244
\end{tabular} \normalsize

  \medskip

\noindent
Remarkably, observe that the two middle rows are a shift from each other by two positions.  Just as interesting, the sequences also correspond to a known sequence in OEIS~\cite{OEIS}, namely A033638.  Specifically, the row $max$(\#$1$s$-$ \#$0$s)  corresponds to this sequence shifted by four positions.  The sequence does not match for $n<10$, but we have verified it matches for $10 \leq n \leq 30$.  The sequence corresponds to ``quarter squares plus $1$'', and by applying the appropriate shifts, the discrepancy for the {\bf Prefer-opposite} sequence of order $n$, for $10 \leq n \leq 30$ is given by:
\[ \bigg\lfloor \frac{(n-4)^2}{4} \bigg\rfloor + \bigg\lfloor \frac{(n-2)^2}{4} \bigg\rfloor + 2.\]
This leads to the following conjecture. 
 
\begin{conjecture} \label{con:opp}
The de Bruijn sequence constructed by the {\bf Pref-opposite} algorithm has discrepancy $\Theta(n^2)$.
\end{conjecture}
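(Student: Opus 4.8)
The plan is to prove the two matching bounds $\disc(\mathcal{D}) \in O(n^2)$ and $\disc(\mathcal{D}) \in \Omega(n^2)$ separately, using the decomposition $\disc(\mathcal{D}) = d_0(\mathcal{D}) + d_1(\mathcal{D})$ from Lemma~\ref{lemma:linearAlgorithm}; it then suffices to show that each of $d_0$ and $d_1$ is $\Theta(n^2)$, where $d_1$ is the largest excess of $1$s over $0$s attained by any prefix of the greedily generated sequence. The starting point is the recent result that {\bf Pref-opposite} admits an $O(n)$-per-bit shift rule realized through the PRR~\cite{sala,pref-opposite}: this replaces the purely greedy (history-dependent) description by a local, cycle-joining description in which $\mathcal{D}$ is a concatenation of the PRR cycles, joined in a fixed order. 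First I would translate the partial-sum trajectory $j_i = |\gamma_i|_1 - |\gamma_i|_0$ into this language, recording for each PRR cycle its internal excursion and its (zero) net contribution, so that the global extremes $d_0$ and $d_1$ become extremes taken over the sequence of cycle-joins.

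For the upper bound $O(n^2)$, the key structural facts to establish are: (i) every maximal run of equal symbols has length at most $n$, which is immediate since $0^{n+1}$ and $1^{n+1}$ would each force a length-$n$ substring to occur twice; and (ii) the prefer-opposite rule is \emph{self-correcting}, in the sense that only $O(n)$ consecutive forced repeats can bias the running sum in the same direction before the structure forces a compensating run of the opposite symbol. Combining (i) and (ii), the partial sum cannot drift by more than $O(n)\cdot O(n) = O(n^2)$ between two consecutive returns to a fixed baseline, which bounds $d_0$ and $d_1$ by $O(n^2)$. Establishing (ii) rigorously, namely quantifying how long the greedy rule can be forced in one direction, is where most of the work lies and is the main obstacle, since it requires controlling exactly which length-$n$ windows have already been consumed at the moment each forced repeat occurs.

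For the lower bound $\Omega(n^2)$, I would exhibit an explicit contiguous region of $\mathcal{D}$ that is heavily biased toward one symbol. The natural candidate is the phase of the construction that processes strings of the form $0^a1^b$ with long single blocks, where the cycle-joining order produces $\Omega(n)$ successive runs of the same sign, each of length $\Omega(n)$; summing their contributions yields an excess of $\Omega(n^2)$ in a single prefix, matching the upper bound. Alternatively, since the data matches A033638~\cite{OEIS} exactly, one could instead target the sharper closed form $\lfloor (n-4)^2/4\rfloor + \lfloor (n-2)^2/4\rfloor + 2$, proving it by induction on $n$ and exploiting a self-similar relationship between the order-$n$ and order-$(n-1)$ PRR cycle decompositions. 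This stronger route would immediately imply $\Theta(n^2)$, but it presupposes pinning down the precise enumeration order of the joins, which is exactly the delicate point that makes both the exact formula and the clean constant in fact (ii) difficult to control.
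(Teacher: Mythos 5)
You should first be aware that the statement you set out to prove is, in the paper, exactly that: a conjecture (Conjecture~\ref{con:opp}). The paper offers no proof of it. Its evidence is (a) computed discrepancies for $10 \le n \le 30$, (b) the empirical match of $\max(\#1\text{s}-\#0\text{s})$ with OEIS sequence A033638 (verified only for $10 \le n \le 30$), and (c) a \emph{conditional} lower bound: \emph{if} every \textbf{Pref-opposite} sequence ends with the suffix $0^j1^{n-j}\cdot 0^{j-1}1^{n-j+1}\cdots 01^{n-1}\cdot 10^{n-1}$ (checked only for $2 \le n \le 25$), \emph{then} the discrepancy is at least $(\lceil n/2\rceil -1)(\lfloor n/2\rfloor -1)+n = \Omega(n^2)$. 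Resolving the conjecture is explicitly listed among the paper's open problems, so there is no paper proof to compare yours against; your proposal must stand alone as a complete argument, and it does not.

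The gaps are concrete. Your upper bound rests entirely on fact (ii), the ``self-correcting'' property, which you assert but do not prove and yourself flag as the main obstacle; since fact (i) (runs have length at most $n$) is trivial, (ii) is the whole content of the $O(n^2)$ bound, so nothing is established. Worse, even granting (i) and (ii) as stated, the conclusion does not follow: knowing that at most $O(n)$ consecutive forced repeats can push the partial sums $j_i$ in one direction before ``a compensating run'' appears does not bound $d_1(\mathcal{D})$ by $O(n^2)$, because nothing guarantees the compensation returns the walk to a fixed baseline. The sequence has $2^n$ steps, and locally alternating biased episodes, each with a possibly positive net gain, are perfectly compatible with cumulative drift far exceeding $n^2$; you would need a global cancellation argument (e.g., that the net gain over each episode pair is nonpositive, or that only $O(n)$ episodes can leave uncompensated gains), and that is precisely what is missing. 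Your lower bound has the same status as the paper's observation: it presupposes that the construction actually passes through a phase of $\Omega(n)$ same-sign runs of length $\Omega(n)$ (equivalently, that the $0^a1^b$ suffix structure occurs for all $n$), which is exactly the empirically observed but unproven structural claim; the alternative route via the exact A033638 formula presupposes pinning down the precise join order, which you again acknowledge you cannot do. In short, your plan is a sensible decomposition via Lemma~\ref{lemma:linearAlgorithm} and mirrors the paper's evidence, but both halves bottom out in unproven structural claims about the greedy/PRR cycle-joining order, so the conjecture remains open.
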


We conclude this section with an observation regarding the {\bf Pref-opposite} de Bruijn sequence:  For $2 \leq n \leq 25$, each sequence has the following suffix where $j=\lceil n/3 \rceil$:
\[ 0^j1^{n-j} \cdot 0^{j-1}1^{n-j+1} \ \cdots \ 01^{n-1} \cdot 10^{n-1}. \] 
For example, when $n=10$, the {\bf Pref-opposite} de Bruijn sequence has suffix 
\[ 0000001111 \cdot 00000\underline{11111 \cdot 0000111111 \cdot 0001111111 \cdot 0011111111 \cdot 0111111111 \cdot 1}000000000, \]
and the underlined substring has $5+6+7+8+10$ ones and $4+3+2+1$ zeros.  A slight rearrangement gives a lower bound of  $(5-1)+ (6-2) + (7-3) + (8-4) + 10 = 4\cdot 4 + 10=  26$ for the discrepancy of the sequence.  The actual discrepancy is 27.
More generally, if this suffix is indeed a suffix for each {\bf Pref-opposite} de Bruijn sequence, then a lower bound on its discrepancy will be
\[  (\lceil n/2 \rceil -1)  (\lfloor n/2 \rfloor -1) + n \  = \  \Omega(n^2).\]

\section{Group 3: PCR-based constructions} \label{sec:pcr}

In this section we consider the four de Bruijn sequence constructions in Group 3 based on the PCR.   The constructions {\bf PCR1}, {\bf PCR2}, {\bf PCR3}, and {\bf PCR4} are based on shift-rules presented in~\cite{binframework}. Like the other shift-rule constructions, these four rules result from joining smaller cycles based on the underlying feedback function; depending on how the ``bridge states'' are defined leads to the different shift-rules. 
The sequences generated by {\bf PCR1}  are the same as the ones generated by the prefer-$0$ greedy construction; they are the complements of the sequences generated by {\bf PCR1}, and so they have the same. The sequences generated by {\bf PCR1} can also be generated by two necklace concatenation constructions, one based on lexicographic order~\cite{fkm2}, and another taking a recursive approach~\cite{Ralston:1981}.

The sequences generated by {\bf PCR2}  are the same as the ones generated by a necklace concatenation construction based on colex order~\cite{grandma,grandma2}.  The {\bf PCR3} is based on a general approach in~\cite{jansen} and revisited in~\cite{wong}.  

To get a better feel for these four de Bruijn sequence constructions, the following graphs illustrate the running difference between the number of $1$s and the number of $0$s in each prefix of the given de Bruijn sequence.  The examples are for $n=10$, so the de Bruijn sequences have length $2^{10} = 1024$.

\input{tables-PCR.tex}

The discrepancy for the sequence generated by the {\bf PCR1} construction has already been studied in~\cite{cooper} where they show that the discrepancy is $\Theta(\frac{2^n\log n}{n})$.  The sequences generated by the {\bf PCR2} and {\bf PCR3}  constructions appear to have a similar growth trajectories.   More interesting are the sequences generated by the {\bf PCR4} construction that, from Table~\ref{tab:discreps}, appear to have discrepancy that is closest to that of a random string.  It would be interesting to do a more detailed investigation of this construction, which is based on a very simple successor rule.

\section{Group 4: Weight range constructions and the PSR/CSR} \label{sec:weight}


In this section we consider two de Bruijn sequence constructions that join smaller cycles based on weight (number of $1$s).  In some related works the term density is also used to mean weight, so will use the variable $d$ to indicate a weight.
The {\bf Cool-lex} construction~\cite{bubble3}, is a concatenation approach which is based on creating underlying cycles which contain all strings with weights $d$ and $d+1$ given $0 \leq d < n$.  Then, appropriate such cycles can be joined together to obtain a de Bruijn sequence~\cite{dbrange}.  By the nature of how the cycles are joined, most length-$n$ substrings in the first half of the resulting de Bruijn sequence have weight less than or equal to $n/2$.  Similarly, most length-$n$ substrings in the latter half of the sequence have weight greater than or equal to $n/2$.  Thus, as one would expect, the resulting de Bruijn sequence has a very large discrepancy.  The  {\bf Weight-range} construction is a new construction presented in this section.  Its resulting de Bruijn sequence
has discrepancy that attains the asymptotic upper bound of $\Theta(2^n/\sqrt{n})$.

To get a better feel for these two de Bruijn sequence constructions, the following graphs illustrate the running difference between the number of $1$s and the number of $0$s in each prefix of the given de Bruijn sequence.  The examples are for $n=10$, so the de Bruijn sequences have length $2^{10} = 1024$.

\begin{center}

\begin{tikzpicture} \footnotesize
\begin{axis}[
    width = 3.0in,
    xlabel={Cool-lex sequence for $n=10$},
    ylabel={\# 1s $-$ \# 0s in prefix},
    xmin=0, xmax=1024,
    ymin=-25, ymax=140,
    xtick={0,200,400,600,800,1000},
    ytick={0,25, 50,75, 100,125},
    legend pos=north east,
    ymajorgrids=true,
    grid style=dashed,
]

      \addplot[
    color=blue,
    ]
    coordinates {
(0,0)(1,-1)(2,0)(3,1)(4,2)(5,3)(6,2)(7,3)(8,4)(9,5)(10,6)(11,7)(12,6)(13,7)(14,8)(15,9)(16,8)(17,9)(18,10)(19,11)(20,12)(21,13)(22,14)(23,13)(24,14)(25,15)(26,14)(27,15)(28,16)(29,17)(30,18)(31,19)(32,20)(33,21)(34,20)(35,21)(36,20)(37,21)(38,22)(39,23)(40,24)(41,25)(42,26)(43,27)(44,28)(45,27)(46,26)(47,25)(48,26)(49,27)(50,28)(51,27)(52,28)(53,29)(54,30)(55,31)(56,30)(57,29)(58,28)(59,29)(60,30)(61,29)(62,30)(63,31)(64,32)(65,33)(66,34)(67,33)(68,32)(69,31)(70,32)(71,31)(72,32)(73,33)(74,34)(75,35)(76,36)(77,37)(78,36)(79,35)(80,36)(81,35)(82,36)(83,37)(84,38)(85,37)(86,38)(87,39)(88,40)(89,39)(90,38)(91,39)(92,38)(93,39)(94,40)(95,39)(96,40)(97,41)(98,42)(99,43)(100,42)(101,41)(102,42)(103,41)(104,42)(105,41)(106,42)(107,43)(108,44)(109,45)(110,46)(111,45)(112,44)(113,45)(114,46)(115,45)(116,46)(117,47)(118,46)(119,47)(120,48)(121,49)(122,48)(123,47)(124,48)(125,49)(126,48)(127,49)(128,48)(129,49)(130,50)(131,51)(132,52)(133,51)(134,50)(135,51)(136,52)(137,53)(138,52)(139,53)(140,52)(141,53)(142,54)(143,55)(144,54)(145,53)(146,54)(147,55)(148,56)(149,55)(150,54)(151,55)(152,56)(153,57)(154,58)(155,57)(156,56)(157,57)(158,58)(159,57)(160,56)(161,57)(162,58)(163,59)(164,60)(165,61)(166,60)(167,59)(168,60)(169,59)(170,58)(171,59)(172,60)(173,61)(174,62)(175,63)(176,64)(177,63)(178,64)(179,63)(180,62)(181,63)(182,64)(183,65)(184,64)(185,65)(186,66)(187,67)(188,66)(189,67)(190,66)(191,65)(192,66)(193,67)(194,66)(195,67)(196,68)(197,69)(198,70)(199,69)(200,70)(201,69)(202,68)(203,69)(204,68)(205,69)(206,70)(207,71)(208,72)(209,73)(210,72)(211,73)(212,72)(213,73)(214,72)(215,73)(216,74)(217,73)(218,74)(219,75)(220,76)(221,75)(222,76)(223,75)(224,76)(225,75)(226,76)(227,75)(228,76)(229,77)(230,78)(231,79)(232,78)(233,79)(234,78)(235,79)(236,80)(237,79)(238,80)(239,81)(240,80)(241,81)(242,82)(243,81)(244,82)(245,81)(246,82)(247,83)(248,82)(249,83)(250,82)(251,83)(252,84)(253,85)(254,84)(255,85)(256,84)(257,85)(258,86)(259,85)(260,84)(261,85)(262,86)(263,87)(264,88)(265,87)(266,88)(267,87)(268,88)(269,87)(270,86)(271,87)(272,88)(273,89)(274,90)(275,91)(276,90)(277,91)(278,92)(279,91)(280,90)(281,91)(282,92)(283,91)(284,92)(285,93)(286,94)(287,93)(288,94)(289,95)(290,94)(291,93)(292,94)(293,93)(294,94)(295,95)(296,96)(297,97)(298,96)(299,97)(300,98)(301,97)(302,98)(303,97)(304,98)(305,97)(306,98)(307,99)(308,100)(309,99)(310,100)(311,101)(312,100)(313,101)(314,102)(315,101)(316,100)(317,101)(318,102)(319,103)(320,102)(321,103)(322,104)(323,103)(324,104)(325,103)(326,102)(327,103)(328,104)(329,105)(330,106)(331,105)(332,106)(333,107)(334,108)(335,107)(336,106)(337,105)(338,106)(339,107)(340,108)(341,109)(342,108)(343,109)(344,110)(345,109)(346,108)(347,107)(348,108)(349,109)(350,110)(351,111)(352,112)(353,111)(354,112)(355,111)(356,110)(357,109)(358,110)(359,111)(360,112)(361,113)(362,114)(363,115)(364,114)(365,113)(366,112)(367,111)(368,110)(369,111)(370,112)(371,111)(372,112)(373,113)(374,114)(375,113)(376,112)(377,111)(378,110)(379,109)(380,110)(381,109)(382,110)(383,111)(384,112)(385,113)(386,112)(387,111)(388,110)(389,109)(390,110)(391,109)(392,110)(393,111)(394,110)(395,111)(396,112)(397,111)(398,110)(399,109)(400,108)(401,109)(402,108)(403,109)(404,108)(405,109)(406,110)(407,111)(408,110)(409,109)(410,108)(411,107)(412,108)(413,109)(414,108)(415,109)(416,108)(417,109)(418,110)(419,109)(420,108)(421,107)(422,106)(423,107)(424,108)(425,107)(426,106)(427,107)(428,108)(429,109)(430,108)(431,107)(432,106)(433,105)(434,106)(435,105)(436,104)(437,105)(438,106)(439,107)(440,108)(441,107)(442,106)(443,105)(444,106)(445,105)(446,104)(447,105)(448,106)(449,105)(450,106)(451,107)(452,106)(453,105)(454,104)(455,105)(456,104)(457,103)(458,104)(459,103)(460,104)(461,105)(462,106)(463,105)(464,104)(465,103)(466,104)(467,103)(468,104)(469,103)(470,104)(471,103)(472,104)(473,105)(474,104)(475,103)(476,102)(477,103)(478,102)(479,103)(480,102)(481,101)(482,102)(483,103)(484,104)(485,103)(486,102)(487,101)(488,102)(489,103)(490,102)(491,101)(492,102)(493,101)(494,102)(495,103)(496,102)(497,101)(498,100)(499,101)(500,102)(501,101)(502,102)(503,101)(504,100)(505,101)(506,102)(507,101)(508,100)(509,99)(510,100)(511,101)(512,100)(513,99)(514,98)(515,99)(516,100)(517,101)(518,100)(519,99)(520,98)(521,99)(522,98)(523,97)(524,96)(525,97)(526,98)(527,99)(528,100)(529,99)(530,98)(531,99)(532,98)(533,97)(534,96)(535,97)(536,98)(537,97)(538,98)(539,99)(540,98)(541,97)(542,98)(543,97)(544,96)(545,95)(546,96)(547,95)(548,96)(549,97)(550,98)(551,97)(552,96)(553,97)(554,96)(555,95)(556,96)(557,95)(558,96)(559,95)(560,96)(561,97)(562,96)(563,95)(564,96)(565,95)(566,94)(567,95)(568,96)(569,95)(570,94)(571,95)(572,96)(573,95)(574,94)(575,95)(576,94)(577,93)(578,94)(579,93)(580,92)(581,93)(582,94)(583,95)(584,94)(585,93)(586,94)(587,93)(588,94)(589,93)(590,92)(591,93)(592,92)(593,93)(594,94)(595,93)(596,92)(597,93)(598,92)(599,93)(600,92)(601,93)(602,92)(603,93)(604,92)(605,93)(606,92)(607,91)(608,92)(609,91)(610,92)(611,91)(612,92)(613,91)(614,90)(615,91)(616,92)(617,91)(618,90)(619,91)(620,90)(621,91)(622,90)(623,89)(624,88)(625,89)(626,90)(627,91)(628,90)(629,89)(630,90)(631,91)(632,90)(633,89)(634,88)(635,89)(636,88)(637,89)(638,90)(639,89)(640,88)(641,89)(642,90)(643,89)(644,88)(645,87)(646,86)(647,87)(648,88)(649,89)(650,88)(651,87)(652,88)(653,87)(654,86)(655,85)(656,84)(657,85)(658,86)(659,87)(660,88)(661,87)(662,88)(663,87)(664,86)(665,85)(666,84)(667,85)(668,86)(669,85)(670,86)(671,87)(672,86)(673,87)(674,86)(675,85)(676,84)(677,83)(678,84)(679,83)(680,84)(681,85)(682,86)(683,85)(684,86)(685,85)(686,84)(687,83)(688,84)(689,83)(690,84)(691,83)(692,84)(693,85)(694,84)(695,85)(696,84)(697,83)(698,82)(699,83)(700,84)(701,83)(702,82)(703,83)(704,84)(705,83)(706,84)(707,83)(708,82)(709,81)(710,82)(711,81)(712,80)(713,81)(714,82)(715,83)(716,82)(717,83)(718,82)(719,81)(720,82)(721,81)(722,80)(723,81)(724,80)(725,81)(726,82)(727,81)(728,82)(729,81)(730,80)(731,81)(732,80)(733,81)(734,80)(735,79)(736,80)(737,81)(738,80)(739,81)(740,80)(741,79)(742,80)(743,79)(744,78)(745,77)(746,78)(747,79)(748,80)(749,79)(750,80)(751,79)(752,80)(753,79)(754,78)(755,77)(756,78)(757,77)(758,78)(759,79)(760,78)(761,79)(762,78)(763,79)(764,78)(765,77)(766,78)(767,77)(768,76)(769,77)(770,78)(771,77)(772,78)(773,77)(774,78)(775,77)(776,78)(777,77)(778,76)(779,75)(780,76)(781,77)(782,76)(783,77)(784,76)(785,77)(786,76)(787,75)(788,74)(789,73)(790,74)(791,75)(792,76)(793,75)(794,76)(795,77)(796,76)(797,75)(798,74)(799,73)(800,72)(801,73)(802,74)(803,75)(804,74)(805,75)(806,74)(807,73)(808,72)(809,71)(810,70)(811,71)(812,72)(813,73)(814,74)(815,73)(816,72)(817,71)(818,70)(819,69)(820,68)(821,67)(822,68)(823,67)(824,68)(825,69)(826,68)(827,67)(828,66)(829,65)(830,64)(831,63)(832,64)(833,63)(834,64)(835,63)(836,64)(837,63)(838,62)(839,61)(840,60)(841,59)(842,58)(843,59)(844,58)(845,57)(846,58)(847,59)(848,58)(849,57)(850,56)(851,55)(852,54)(853,55)(854,54)(855,53)(856,54)(857,53)(858,54)(859,53)(860,52)(861,51)(862,50)(863,49)(864,50)(865,49)(866,48)(867,47)(868,48)(869,49)(870,48)(871,47)(872,46)(873,45)(874,46)(875,45)(876,44)(877,43)(878,44)(879,43)(880,44)(881,43)(882,42)(883,41)(884,40)(885,41)(886,40)(887,39)(888,40)(889,39)(890,38)(891,39)(892,38)(893,37)(894,36)(895,35)(896,36)(897,35)(898,34)(899,33)(900,32)(901,33)(902,34)(903,33)(904,32)(905,31)(906,32)(907,31)(908,30)(909,29)(910,28)(911,29)(912,28)(913,29)(914,28)(915,27)(916,26)(917,27)(918,26)(919,25)(920,24)(921,25)(922,24)(923,23)(924,24)(925,23)(926,22)(927,21)(928,22)(929,21)(930,20)(931,19)(932,18)(933,17)(934,18)(935,19)(936,18)(937,17)(938,18)(939,17)(940,16)(941,15)(942,14)(943,13)(944,14)(945,13)(946,14)(947,13)(948,12)(949,13)(950,12)(951,11)(952,10)(953,9)(954,8)(955,7)(956,8)(957,9)(958,8)(959,9)(960,8)(961,7)(962,6)(963,5)(964,4)(965,3)(966,2)(967,3)(968,4)(969,3)(970,2)(971,1)(972,0)(973,-1)(974,-2)(975,-3)(976,-4)(977,-5)(978,-6)(979,-5)(980,-6)(981,-7)(982,-8)(983,-9)(984,-10)(985,-11)(986,-12)(987,-13)(988,-12)(989,-11)(990,-10)(991,-11)(992,-12)(993,-13)(994,-14)(995,-15)(996,-16)(997,-15)(998,-14)(999,-13)(1000,-12)(1001,-11)(1002,-12)(1003,-13)(1004,-14)(1005,-15)(1006,-14)(1007,-13)(1008,-12)(1009,-11)(1010,-10)(1011,-9)(1012,-8)(1013,-9)(1014,-10)(1015,-9)(1016,-8)(1017,-7)(1018,-6)(1019,-5)(1020,-4)(1021,-3)(1022,-2)(1023,-1)(1024,0)
    };
    \end{axis}
\end{tikzpicture}
%
   \ \ \  
    \begin{tikzpicture} \footnotesize
\begin{axis}[
    width = 3.0in,
    xlabel={Weight-range sequence for $n=10$},
    ylabel={\# 1s $-$ \# 0s in prefix},
    xmin=0, xmax=1024,
    ymin=-25, ymax=140,
   xtick={0,200,400,600,800,1000},
    ytick={0,25, 50,75, 100,125},
    legend pos=north east,
    ymajorgrids=true,
    grid style=dashed,
]
          \addplot[
    color=red,
    ]
    coordinates {
(0,0)(1,1)(2,2)(3,3)(4,4)(5,5)(6,4)(7,3)(8,2)(9,1)(10,2)(11,3)(12,4)(13,5)(14,6)(15,7)(16,6)(17,5)(18,4)(19,5)(20,4)(21,5)(22,6)(23,7)(24,8)(25,9)(26,8)(27,7)(28,6)(29,7)(30,8)(31,7)(32,8)(33,9)(34,10)(35,11)(36,10)(37,9)(38,8)(39,9)(40,10)(41,11)(42,10)(43,11)(44,12)(45,13)(46,12)(47,11)(48,10)(49,11)(50,12)(51,13)(52,14)(53,13)(54,14)(55,15)(56,14)(57,13)(58,12)(59,13)(60,14)(61,15)(62,16)(63,17)(64,16)(65,17)(66,16)(67,15)(68,14)(69,15)(70,16)(71,17)(72,18)(73,19)(74,20)(75,21)(76,20)(77,19)(78,20)(79,19)(80,18)(81,19)(82,20)(83,21)(84,22)(85,23)(86,22)(87,21)(88,22)(89,21)(90,22)(91,21)(92,22)(93,23)(94,24)(95,25)(96,24)(97,23)(98,24)(99,23)(100,24)(101,25)(102,24)(103,25)(104,26)(105,27)(106,26)(107,25)(108,26)(109,25)(110,26)(111,27)(112,28)(113,27)(114,28)(115,29)(116,28)(117,27)(118,28)(119,27)(120,28)(121,29)(122,30)(123,31)(124,30)(125,31)(126,30)(127,29)(128,30)(129,29)(130,30)(131,31)(132,32)(133,33)(134,34)(135,35)(136,34)(137,33)(138,34)(139,35)(140,34)(141,33)(142,34)(143,35)(144,36)(145,37)(146,36)(147,35)(148,36)(149,37)(150,36)(151,37)(152,36)(153,37)(154,38)(155,39)(156,38)(157,37)(158,38)(159,39)(160,38)(161,39)(162,40)(163,39)(164,40)(165,41)(166,40)(167,39)(168,40)(169,41)(170,40)(171,41)(172,42)(173,43)(174,42)(175,43)(176,42)(177,41)(178,42)(179,43)(180,42)(181,43)(182,44)(183,45)(184,46)(185,47)(186,46)(187,45)(188,46)(189,47)(190,48)(191,47)(192,46)(193,47)(194,48)(195,49)(196,48)(197,49)(198,48)(199,49)(200,50)(201,49)(202,48)(203,49)(204,50)(205,51)(206,50)(207,51)(208,52)(209,51)(210,52)(211,51)(212,50)(213,51)(214,52)(215,53)(216,52)(217,53)(218,54)(219,55)(220,56)(221,55)(222,54)(223,55)(224,56)(225,57)(226,58)(227,57)(228,58)(229,57)(230,58)(231,57)(232,56)(233,57)(234,58)(235,59)(236,60)(237,59)(238,60)(239,61)(240,62)(241,61)(242,60)(243,61)(244,62)(245,63)(246,64)(247,65)(248,64)(249,65)(250,66)(251,65)(252,64)(253,65)(254,66)(255,67)(256,68)(257,69)(258,70)(259,69)(260,70)(261,69)(262,68)(263,69)(264,70)(265,71)(266,72)(267,73)(268,74)(269,75)(270,76)(271,75)(272,76)(273,75)(274,76)(275,75)(276,76)(277,75)(278,76)(279,77)(280,78)(281,77)(282,78)(283,77)(284,78)(285,77)(286,78)(287,79)(288,78)(289,79)(290,80)(291,79)(292,80)(293,79)(294,80)(295,79)(296,80)(297,81)(298,82)(299,83)(300,84)(301,83)(302,84)(303,83)(304,84)(305,85)(306,84)(307,85)(308,84)(309,85)(310,86)(311,85)(312,86)(313,87)(314,88)(315,89)(316,88)(317,89)(318,88)(319,89)(320,90)(321,91)(322,90)(323,91)(324,92)(325,93)(326,92)(327,93)(328,92)(329,93)(330,94)(331,95)(332,96)(333,95)(334,96)(335,97)(336,96)(337,97)(338,96)(339,97)(340,98)(341,99)(342,100)(343,101)(344,102)(345,103)(346,102)(347,103)(348,104)(349,103)(350,104)(351,105)(352,104)(353,105)(354,106)(355,107)(356,106)(357,107)(358,108)(359,107)(360,108)(361,109)(362,110)(363,111)(364,112)(365,113)(366,112)(367,113)(368,114)(369,115)(370,114)(371,115)(372,116)(373,117)(374,118)(375,119)(376,118)(377,119)(378,120)(379,121)(380,122)(381,121)(382,122)(383,123)(384,124)(385,125)(386,126)(387,127)(388,128)(389,129)(390,130)(391,131)(392,130)(393,129)(394,128)(395,127)(396,126)(397,127)(398,128)(399,129)(400,130)(401,129)(402,130)(403,129)(404,128)(405,127)(406,126)(407,127)(408,128)(409,129)(410,130)(411,129)(412,128)(413,129)(414,128)(415,127)(416,126)(417,127)(418,128)(419,129)(420,130)(421,129)(422,128)(423,127)(424,128)(425,127)(426,126)(427,127)(428,128)(429,129)(430,130)(431,129)(432,128)(433,127)(434,126)(435,127)(436,126)(437,127)(438,128)(439,129)(440,130)(441,129)(442,128)(443,127)(444,126)(445,125)(446,124)(447,125)(448,126)(449,127)(450,126)(451,127)(452,128)(453,127)(454,126)(455,125)(456,124)(457,125)(458,126)(459,127)(460,126)(461,127)(462,126)(463,127)(464,126)(465,125)(466,124)(467,125)(468,126)(469,127)(470,126)(471,127)(472,126)(473,125)(474,126)(475,125)(476,124)(477,125)(478,126)(479,127)(480,126)(481,127)(482,126)(483,125)(484,124)(485,125)(486,124)(487,125)(488,126)(489,127)(490,126)(491,127)(492,126)(493,125)(494,124)(495,123)(496,122)(497,123)(498,124)(499,125)(500,124)(501,123)(502,124)(503,125)(504,124)(505,123)(506,122)(507,123)(508,124)(509,125)(510,124)(511,123)(512,124)(513,123)(514,124)(515,123)(516,122)(517,123)(518,124)(519,125)(520,124)(521,123)(522,124)(523,123)(524,122)(525,123)(526,122)(527,123)(528,124)(529,125)(530,124)(531,123)(532,124)(533,123)(534,122)(535,121)(536,120)(537,121)(538,122)(539,123)(540,122)(541,121)(542,120)(543,121)(544,122)(545,121)(546,120)(547,121)(548,122)(549,123)(550,122)(551,121)(552,120)(553,121)(554,120)(555,121)(556,120)(557,121)(558,122)(559,123)(560,122)(561,121)(562,120)(563,121)(564,120)(565,119)(566,118)(567,119)(568,120)(569,121)(570,120)(571,119)(572,118)(573,117)(574,118)(575,119)(576,118)(577,119)(578,120)(579,121)(580,120)(581,119)(582,118)(583,117)(584,118)(585,117)(586,116)(587,117)(588,118)(589,119)(590,118)(591,117)(592,116)(593,115)(594,114)(595,115)(596,114)(597,115)(598,116)(599,117)(600,116)(601,115)(602,114)(603,113)(604,112)(605,111)(606,110)(607,111)(608,112)(609,111)(610,112)(611,113)(612,112)(613,113)(614,112)(615,111)(616,110)(617,111)(618,112)(619,111)(620,112)(621,113)(622,112)(623,111)(624,112)(625,111)(626,110)(627,111)(628,112)(629,111)(630,112)(631,113)(632,112)(633,111)(634,110)(635,111)(636,110)(637,111)(638,112)(639,111)(640,112)(641,113)(642,112)(643,111)(644,110)(645,109)(646,108)(647,109)(648,110)(649,109)(650,110)(651,109)(652,110)(653,111)(654,110)(655,109)(656,108)(657,109)(658,110)(659,109)(660,110)(661,109)(662,110)(663,109)(664,110)(665,109)(666,108)(667,109)(668,110)(669,109)(670,110)(671,109)(672,110)(673,109)(674,108)(675,109)(676,108)(677,109)(678,110)(679,109)(680,110)(681,109)(682,110)(683,109)(684,108)(685,107)(686,106)(687,107)(688,108)(689,107)(690,108)(691,107)(692,106)(693,107)(694,108)(695,107)(696,106)(697,107)(698,108)(699,107)(700,108)(701,107)(702,106)(703,107)(704,106)(705,107)(706,106)(707,107)(708,108)(709,107)(710,108)(711,107)(712,106)(713,107)(714,106)(715,105)(716,104)(717,105)(718,106)(719,105)(720,106)(721,105)(722,104)(723,103)(724,104)(725,103)(726,102)(727,103)(728,104)(729,103)(730,104)(731,103)(732,102)(733,101)(734,100)(735,101)(736,100)(737,101)(738,102)(739,101)(740,102)(741,101)(742,100)(743,99)(744,98)(745,97)(746,96)(747,97)(748,98)(749,97)(750,96)(751,97)(752,98)(753,97)(754,96)(755,97)(756,96)(757,97)(758,98)(759,97)(760,96)(761,97)(762,98)(763,97)(764,96)(765,95)(766,94)(767,95)(768,96)(769,95)(770,94)(771,95)(772,94)(773,95)(774,94)(775,95)(776,94)(777,95)(778,96)(779,95)(780,94)(781,95)(782,94)(783,95)(784,94)(785,93)(786,92)(787,93)(788,94)(789,93)(790,92)(791,93)(792,92)(793,91)(794,92)(795,91)(796,90)(797,91)(798,92)(799,91)(800,90)(801,91)(802,90)(803,89)(804,88)(805,89)(806,88)(807,89)(808,90)(809,89)(810,88)(811,89)(812,88)(813,87)(814,86)(815,85)(816,84)(817,85)(818,86)(819,85)(820,84)(821,83)(822,84)(823,85)(824,84)(825,83)(826,82)(827,83)(828,82)(829,83)(830,82)(831,81)(832,82)(833,83)(834,82)(835,81)(836,80)(837,81)(838,80)(839,79)(840,80)(841,79)(842,80)(843,81)(844,80)(845,79)(846,78)(847,79)(848,78)(849,77)(850,76)(851,75)(852,76)(853,77)(854,76)(855,75)(856,74)(857,73)(858,74)(859,73)(860,74)(861,73)(862,74)(863,75)(864,74)(865,73)(866,72)(867,71)(868,72)(869,71)(870,70)(871,69)(872,70)(873,71)(874,70)(875,69)(876,68)(877,67)(878,66)(879,67)(880,66)(881,65)(882,66)(883,67)(884,66)(885,65)(886,64)(887,63)(888,62)(889,61)(890,62)(891,61)(892,62)(893,63)(894,62)(895,61)(896,60)(897,59)(898,58)(899,57)(900,56)(901,55)(902,56)(903,55)(904,56)(905,55)(906,56)(907,55)(908,56)(909,55)(910,56)(911,55)(912,54)(913,53)(914,54)(915,53)(916,54)(917,53)(918,54)(919,53)(920,52)(921,53)(922,52)(923,51)(924,52)(925,51)(926,52)(927,51)(928,52)(929,51)(930,50)(931,49)(932,48)(933,47)(934,48)(935,47)(936,48)(937,47)(938,46)(939,47)(940,46)(941,47)(942,46)(943,45)(944,46)(945,45)(946,44)(947,43)(948,42)(949,43)(950,42)(951,43)(952,42)(953,41)(954,40)(955,41)(956,40)(957,39)(958,38)(959,39)(960,38)(961,39)(962,38)(963,37)(964,36)(965,35)(966,36)(967,35)(968,34)(969,35)(970,34)(971,35)(972,34)(973,33)(974,32)(975,31)(976,30)(977,29)(978,28)(979,29)(980,28)(981,27)(982,28)(983,27)(984,26)(985,27)(986,26)(987,25)(988,24)(989,25)(990,24)(991,23)(992,24)(993,23)(994,22)(995,21)(996,20)(997,19)(998,18)(999,19)(1000,18)(1001,17)(1002,16)(1003,17)(1004,16)(1005,15)(1006,14)(1007,13)(1008,12)(1009,13)(1010,12)(1011,11)(1012,10)(1013,9)(1014,10)(1015,9)(1016,8)(1017,7)(1018,6)(1019,5)(1020,4)(1021,3)(1022,2)(1023,1)(1024,0)
    };
 
\end{axis}
\end{tikzpicture}

\end{center}

Notice that if we had shifted the starting position of the {\bf Cool-lex} sequence the profile of the graph would be very similar to that the {\bf Weight-range} sequence.  In fact, the discrepancies of the two sequences are the same except when $n \bmod 4 \equiv 1$ (see Table~\ref{tab:discreps}).  This will be discussed more after we present the {\bf Weight-range} construction.

A \defo{minimum weight de Bruijn sequence}  is a cyclic sequence that contains each binary string of length $n$ with weight at least $d$ exactly once.   A \defo{maximum weight de Bruijn sequence} is defined similarly where the weight of each string is at most $d$.  A construction for the former sequence is given in~\cite{min-weight}; it is constructed by concatenating the periodic reduction of each necklace of weight $\geq d$ when the necklaces are listed in lexicographic order.   Let the resulting sequence be denoted by $\mathcal{D}_d(n)$. 
\begin{remark}
For any $d < n$, $\mathcal{D}_d(n)$ begins with $0^{n-d}1^d$ and ends with $1^n$.   
\end{remark}
By complementing the bits in $\mathcal{D}_d(n)$, we obtain a maximum weight de Bruijn sequence with  weight at most $n-d$.  Denote this sequence by $\mathcal{\overline{D}}_{d}(n)$.   From the previous remark, it begins with $1^{n-d}0^d$ and ends with $0^n$. 
 
\begin{exam} 
The necklaces of length 6 with weight $d \geq 3$ in lexicographic order are:
\[0 0 0 1 1 1,
0 0 1 0 1 1,
0 0 1 1 0 1,
0 0 1 1 1 1,
0 1 0 1 0 1,
0 1 0 1 1 1,
0 1 1 0 1 1,
0 1 1 1 1 1,
1 1 1 1 1 1.\]
Concatenating together their periodic reductions we obtain the minimum weight de Bruijn sequence~$\mathcal{D}_3(6)$.
\[0 0 0 1 1 1 \cdot 
0 0 1 0 1 1 \cdot 
0 0 1 1 0 1 \cdot 
0 0 1 1 1 1 \cdot 
0 1 \cdot 
0 1 0 1 1 1 \cdot 
0 1 1 \cdot 
0 1 1 1 1 1 \cdot 
1\]
As further examples, 
\[ \mathcal{D}_4(6) =   0 0 1 1 1 1 \cdot  0 1 0 1 1 1 \cdot  0 1 1 \cdot  0 1 1 1 1 1 \cdot  1\] and
\[ \mathcal{\overline{D}}_4(6) =   110000 \cdot  101000 \cdot  100 \cdot  100000 \cdot  0.\]
\end{exam}

From the above example observe that:
\begin{itemize}
\item   $\mathcal{D}_3(6)$ contains all binary strings of length $6$ with weight greater than or equal to $3$, 
\item   $\mathcal{\overline{D}}_4(6)$ contains all binary strings of length $6$ with weight less than or equal to $2$,
\item   The length $n{-}1$ prefix of $\mathcal{\overline{D}}_4(6)$, namely $11000$, appears in the wraparound of $\mathcal{D}_3(6)$.
\end{itemize}
Let $\mathcal{D}^r_d(n)$ denote the sequence $\mathcal{D}_d(n)$ with the suffix $1^{d-1}$ rotated to the front.  Then by applying the Gluing Lemma~\cite{dbrange}, the following is a de Bruijn sequence of order $6$: 
\[\underbrace{\underline{11000}01010001001000000}_{\text{$\mathcal{\overline{D}}_4(6)$}} ~ \cdot
~\underbrace{\underline{\red{11}000}1110010110011010011110101011101101111}_{\text{$\mathcal{D}^r_3(6)$}}.\]
Applying this strategy more generally, let $\mathcal{DB}_{max}(n)$ denote the de Bruijn sequence obtained by joining two such smaller cycles.
%
\begin{boxed2}
\noindent
{\bf Weight-range construction}
\[ \mathcal{DB}_{max}(n) = \mathcal{\overline{D}}_{d}(n) \cdot \mathcal{D}^r_{d'}(n), \]
where   $d = \lfloor n/2\rfloor +1$ and $d'=\lceil n/2\rceil$.  
\end{boxed2}
\noindent
A complete C implementation to construct $\mathcal{DB}_{max}(n)$ is given in the Appendix\footnote{It is also available at \url{http://debruijnsequence.org}.}.

The following technical lemma leads to a lower bound for the discrepancy of $\mathcal{DB}_{max}(n)$.

\begin{lemma} \label{lem:binomial}
A maximum weight de Bruijn sequence of order $n$ and maximum weight $d$ has ${n-1 \choose d}$ more 0s than 1s.

\end{lemma}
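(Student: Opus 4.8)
The plan is to determine $|w|_1$ and $|w|_0$ directly by a double-counting argument over the length-$n$ windows of the sequence, and then read off their difference via two standard binomial identities. First I would fix notation: a maximum weight de Bruijn sequence of order $n$ with maximum weight $d$ is a circular string $w$ in which every length-$n$ binary string of weight at most $d$ occurs exactly once as a (cyclic) substring, and no length-$n$ string of larger weight occurs at all. Since there are $\sum_{k=0}^{d}\binom{n}{k}$ such strings and each accounts for exactly one starting position, the length of the sequence is $|w| = \sum_{k=0}^{d}\binom{n}{k}$.

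Next I would sum the weight (number of $1$s) over all $|w|$ length-$n$ windows of $w$ in two different ways. On one hand, the windows range over precisely the strings of weight $0,1,\ldots,d$, so this total equals $\sum_{k=0}^{d} k\binom{n}{k}$. On the other hand, each individual position of $w$ belongs to exactly $n$ windows, so the same total equals $n\,|w|_1$. Equating the two expressions and applying the identity $k\binom{n}{k}=n\binom{n-1}{k-1}$ gives
\[ n\,|w|_1 \;=\; \sum_{k=0}^{d} k\binom{n}{k} \;=\; n\sum_{k=1}^{d}\binom{n-1}{k-1}, \]
so that $|w|_1 = \sum_{j=0}^{d-1}\binom{n-1}{j}$.

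Finally I would compute $|w|_0-|w|_1 = |w| - 2|w|_1$. Expanding $|w|=\sum_{k=0}^{d}\binom{n}{k}$ by Pascal's rule $\binom{n}{k}=\binom{n-1}{k}+\binom{n-1}{k-1}$ yields $|w| = \sum_{k=0}^{d}\binom{n-1}{k} + \sum_{j=0}^{d-1}\binom{n-1}{j}$, and subtracting $2|w|_1 = 2\sum_{j=0}^{d-1}\binom{n-1}{j}$ collapses the telescoping sum to exactly $\binom{n-1}{d}$, which is the claimed excess of $0$s over $1$s.

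This is essentially a clean exercise in double counting combined with the two binomial identities above, so I do not anticipate a serious obstacle. The one step that requires care is the assertion that each position lies in exactly $n$ windows: this holds because $w$ is interpreted circularly and because $|w|=\sum_{k=0}^{d}\binom{n}{k}\ge n$ throughout the relevant range of $d$ (for the construction, where $d=\lfloor n/2\rfloor+1$, this inequality is immediate), so the windows never wrap around onto themselves. Confirming this boundary condition and keeping the index shifts in the binomial manipulations aligned is the only place where a slip could arise.
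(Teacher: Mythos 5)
Your proof is correct and follows essentially the same approach as the paper's: both count the $1$s by noting that each bit of the circular sequence lies in exactly $n$ of the length-$n$ windows, which enumerate the weight-at-most-$d$ strings exactly once, and both then apply the identity $k\binom{n}{k}=n\binom{n-1}{k-1}$. The only cosmetic difference is that the paper counts the $0$s by a second, symmetric double count, namely $\frac{1}{n}\sum_{j=0}^{d}(n-j)\binom{n}{j}=\sum_{j=0}^{d}\binom{n-1}{j}$, whereas you recover them from the total length via Pascal's rule; the two bookkeeping routes are equivalent.
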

\begin{proof}
By definition, a maximum weight de Bruijn sequence of order $n$ and maximum weight $d$ contains every binary string of length $n$ with weight at most $d$ as a substring exactly once. Since each bit in this sequence belongs to $n$ different strings the total number of $1$s in the sequence is
\begin{eqnarray*}
\mathit{ones}  
	 & = & \frac{1}{n} \sum_{j=0}^d  j {n \choose j} \\
	& = &   \frac{0}{n} {n \choose 0}  + \frac{1}{n}   {n \choose 1} + \frac{2}{n}   {n \choose 2} +  \cdots +   \frac{d}{n} {n \choose d}     \\ 
          & =  &  0  +  \ {n-1 \choose 0} + {n-1 \choose 1} + \cdots +  {n-1 \choose d-1},
\end{eqnarray*} 
and the total number of $0$s is
\begin{eqnarray*}
\mathit{zeros}  
	 & = & \frac{1}{n} \sum_{j=0}^d  (n-j) {n \choose j} \\
& = &    \frac{n}{n}  {n \choose 0} +  \frac{n-1}{n} {n \choose 1} +  \frac{n-2}{n}  {n \choose 2} +  \cdots +   \frac{n-d}{n}  {n \choose d}  \\
          	      & =  &   {n-1 \choose 0} + {n-1 \choose 1} + {n-1 \choose 2} + \cdots  + {n-1 \choose d}.
\end{eqnarray*}
Thus $\mathit{zeros} - \mathit{ones} = {n-1 \choose d}$.  
\end{proof}

\begin{theorem}
The de Bruijn sequence $\mathcal{DB}_{max}(n)$ has discrepancy at least  ${n-1 \choose \lfloor n/2 \rfloor} + \lfloor \frac{n}{2} \rfloor$.
\end{theorem}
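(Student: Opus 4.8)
The plan is to prove the bound by exhibiting a single substring of $\mathcal{DB}_{max}(n)$ whose number of $0$s exceeds its number of $1$s by exactly ${n-1 \choose \lfloor n/2\rfloor} + \lfloor n/2\rfloor$; since $\disc(\mathcal{DB}_{max}(n))$ is the maximum of $\big\lvert |u|_1 - |u|_0\big\rvert$ over all substrings $u$, this immediately yields the lower bound. First I would record the global count for the left factor. Writing $d=\lfloor n/2\rfloor+1$, the sequence $\mathcal{\overline{D}}_{d}(n)$ is a maximum weight de Bruijn sequence of order $n$ and maximum weight $n-d=\lceil n/2\rceil-1$, so by Lemma~\ref{lem:binomial} it contains ${n-1 \choose \lceil n/2\rceil-1}$ more $0$s than $1$s. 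Since $(\lceil n/2\rceil-1)+\lfloor n/2\rfloor = n-1$, the symmetry of binomial coefficients gives ${n-1 \choose \lceil n/2\rceil-1}={n-1 \choose \lfloor n/2\rfloor}$, so $\mathcal{\overline{D}}_{d}(n)$ by itself already supplies an excess of ${n-1 \choose \lfloor n/2\rfloor}$ zeros.

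Next I would choose the substring $y$ to straddle the junction between the two glued cycles. Recall that $\mathcal{\overline{D}}_{d}(n)$ begins with $1^{n-d}0^d$, while $\mathcal{D}^r_{d'}(n)$ (with $d'=\lceil n/2\rceil$) is obtained by rotating the suffix $1^{d'-1}$ of $\mathcal{D}_{d'}(n)$ to the front. Since $\mathcal{D}_{d'}(n)$ begins with $0^{n-d'}1^{d'}$ and the rotation removes only a suffix, the length-$(n-1)$ prefix of $\mathcal{D}^r_{d'}(n)$ is exactly $1^{d'-1}0^{n-d'}$. I would then take $y$ to be the suffix of $\mathcal{\overline{D}}_{d}(n)$ obtained by deleting its leading run $1^{n-d}$, concatenated with this prefix $1^{d'-1}0^{n-d'}$ of $\mathcal{D}^r_{d'}(n)$. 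Because $\mathcal{\overline{D}}_{d}(n)$ immediately precedes $\mathcal{D}^r_{d'}(n)$ in $\mathcal{DB}_{max}(n)$, the window $y$ is a genuine contiguous substring.

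Finally I would count. Relative to the whole cycle $\mathcal{\overline{D}}_{d}(n)$, deleting the leading $1^{n-d}$ removes $n-d$ ones and raises the excess of zeros by $n-d$, while appending $1^{d'-1}0^{n-d'}$ contributes $(n-d')-(d'-1)$ to that excess. Hence
\[\big\lvert |y|_0-|y|_1\big\rvert = {n-1 \choose \lfloor n/2\rfloor} + (n-d) + (n-d') - (d'-1) = {n-1 \choose \lfloor n/2\rfloor} + \big(2n-2d'-d+1\big).\]
A short parity check shows $2n-2d'-d+1 = \lfloor n/2\rfloor$ for both even and odd $n$, which completes the bound. I expect the main obstacle to be precisely this junction analysis: the naive choice $y=\mathcal{\overline{D}}_{d}(n)$, or even $\mathcal{\overline{D}}_{d}(n)$ with its leading ones trimmed, falls short of the target by one when $n$ is even, and it is only by crossing into the leading $1^{d'-1}0^{n-d'}$ of the rotated cycle $\mathcal{D}^r_{d'}(n)$ that the additional excess needed to reach $\lfloor n/2\rfloor$ is recovered uniformly in the parity of $n$.
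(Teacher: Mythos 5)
Your proof is correct and follows essentially the same route as the paper's: both use Lemma~\ref{lem:binomial} (together with binomial symmetry) to obtain the ${n-1 \choose \lfloor n/2 \rfloor}$ imbalance of $\mathcal{\overline{D}}_{d}(n)$, strip its leading run $1^{n-d}$, and extend the witness substring into the prefix $1^{d'-1}0^{n-d'}$ of $\mathcal{D}^r_{d'}(n)$. The only difference is cosmetic: the paper appends that prefix only in the even case, handling the two parities separately, whereas you append it in both cases and observe that it contributes zero net excess when $n$ is odd, which unifies the parity check.
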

\begin{proof}
Let $d = \lfloor n/2\rfloor +1$ and $d'=\lceil n/2\rceil$.  Recall that $\mathcal{\overline{D}}_{d}(n)$ is a maximum weight de Bruijn sequence with maximum weight $n-d$.
Thus, by  Lemma~\ref{lem:binomial},  it has ${n-1 \choose n- d}   = {n-1 \choose n- (\lfloor n/2\rfloor +1)}  = {n-1 \choose \lfloor n/2 \rfloor}$ more $0$s than $1$s.
Consider  $\mathcal{\overline{D}}_{d}(n)$ with its prefix of $1^{n-d}$ removed.  The resulting string, which is a substring of  $\mathcal{DB}_{max}(n)$, has  ${n-1 \choose \lfloor n/2 \rfloor} + (n-d)$ more $0$s than $1$s. 
When $n$ is odd we have $n-d = n- \lfloor n/2\rfloor -1 =  \lfloor \frac{n}{2} \rfloor$ and thus $\mathcal{DB}_{max}(n)$  has discrepancy at least ${n-1 \choose \lfloor n/2 \rfloor} + \lfloor \frac{n}{2} \rfloor$.  
When $n$ is even, we additionally add the length $n-1$ prefix of $\mathcal{D}^r_{d'}(n)$ which has more $0$s than $1$s (exactly one more).  Since $n-d+1 = n - (\lfloor n/2\rfloor -1) +1 =  \lfloor \frac{n}{2} \rfloor$ (when $n$ is even) this again means that  $\mathcal{DB}_{max}(n)$  has discrepancy at least ${n-1 \choose \lfloor n/2 \rfloor} + \lfloor \frac{n}{2} \rfloor$.
\end{proof}
By applying Stirling's approximation to  ${n-1 \choose \lfloor n/2 \rfloor}$  we obtain the following corollary.  %
\begin{corollary}
The discrepancy of the de Bruijn sequence $\mathcal{DB}_{max}(n)$ attains the asymptotic upper bound of  $\Theta(\frac{2^n}{\sqrt{n}})$.
\end{corollary}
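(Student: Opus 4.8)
The plan is to sandwich $\disc(\mathcal{DB}_{max}(n))$ between two matching asymptotic bounds. The upper bound is already in hand: as recalled in the introduction, every de Bruijn sequence of order $n$ has discrepancy in $O(2^n/\sqrt{n})$, and $\mathcal{DB}_{max}(n)$ is such a sequence, so $\disc(\mathcal{DB}_{max}(n)) \in O(2^n/\sqrt{n})$ for free. Hence the whole content of the corollary reduces to showing that the lower bound $\binom{n-1}{\lfloor n/2 \rfloor} + \lfloor n/2 \rfloor$ furnished by the preceding theorem is itself in $\Omega(2^n/\sqrt{n})$.

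The key step is estimating the binomial coefficient $\binom{n-1}{\lfloor n/2 \rfloor}$. First I would verify that $\lfloor n/2 \rfloor$ is a maximizing index of the $(n-1)$-th binomial row: when $n = 2t$ the term is $\binom{2t-1}{t}$ and when $n = 2t+1$ it is $\binom{2t}{t}$, so in either parity $\binom{n-1}{\lfloor n/2 \rfloor}$ is exactly the central binomial coefficient of row $n-1$, the largest entry in that row. Stirling's approximation then yields $\binom{n-1}{\lfloor n/2 \rfloor} = \Theta(2^{n-1}/\sqrt{n-1}) = \Theta(2^n/\sqrt{n})$. The additive $\lfloor n/2 \rfloor = \Theta(n)$ term is dominated, so the theorem's lower bound is itself $\Theta(2^n/\sqrt{n})$, and in particular $\Omega(2^n/\sqrt{n})$.

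Combining the two bounds gives $\disc(\mathcal{DB}_{max}(n)) \in \Theta(2^n/\sqrt{n})$. I do not anticipate any real obstacle: the only care required is the routine parity bookkeeping confirming that $\lfloor n/2 \rfloor$ truly sits at the center of row $n-1$ (so that the Stirling estimate is applied to the maximum of the row rather than to an off-center coefficient), together with tracking the constant $\sqrt{2/\pi}$ in Stirling's formula. Everything else follows immediately from the cited $O(2^n/\sqrt{n})$ upper bound and the already-established theorem.
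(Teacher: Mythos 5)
Your proposal is correct and follows essentially the same route as the paper: the paper's (one-line) justification is precisely to combine the preceding theorem's lower bound $\binom{n-1}{\lfloor n/2 \rfloor} + \lfloor n/2 \rfloor$ with Stirling's approximation, implicitly invoking the known $O(2^n/\sqrt{n})$ upper bound from the introduction. Your write-up simply makes explicit the details the paper leaves implicit (the parity check that $\binom{n-1}{\lfloor n/2 \rfloor}$ is the central entry of row $n-1$, and that the additive $\lfloor n/2 \rfloor$ term is negligible), all of which is sound.
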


\noindent
Observe from Table~\ref{tab:discreps} that the discrepancy of $\mathcal{DB}_{max}(n)$ is exactly ${n-1 \choose \lfloor n/2 \rfloor} + \lfloor \frac{n}{2} \rfloor$ for $10 \leq n \leq 25$.  This leads to the following conjecture.

\begin{conjecture} \label{con:max}
The de Bruijn sequence $\mathcal{DB}_{max}(n)$ has discrepancy equal to ${n-1 \choose \lfloor n/2 \rfloor} + \lfloor \frac{n}{2} \rfloor$, and moreover, it is the maximum possible discrepancy over all de Bruijn sequences of order $n$.
\end{conjecture}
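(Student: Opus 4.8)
The plan is to split the statement into two claims and handle them separately, in both cases reducing to an \emph{upper} bound via Lemma~\ref{lemma:linearAlgorithm}, which lets me write $\disc(\mathcal{D}) = \max_i j_i - \min_i j_i$ for the running difference $j_i = |\gamma_i|_1 - |\gamma_i|_0$. The first claim (that $\mathcal{DB}_{max}(n)$ \emph{attains} $\binom{n-1}{\lfloor n/2\rfloor} + \lfloor n/2\rfloor$) already has its lower bound from the preceding theorem, so only a matching upper bound is needed; the second claim (global maximality) is an extremal statement over all de Bruijn sequences.

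For the exact value of $\mathcal{DB}_{max}(n)$, first I would track $j_i$ across the two halves $\mathcal{\overline{D}}_d(n)$ and $\mathcal{D}^r_{d'}(n)$. Both halves are concatenations of periodic reductions of necklaces listed in (complemented) lexicographic order, and every block of $\mathcal{\overline{D}}_d(n)$ is $0$-heavy while every block of $\mathcal{D}_{d'}(n)$ is $1$-heavy. Hence $j_i$ is monotone at block boundaries (non-increasing in the first half, non-decreasing in the second), and inside any single block it deviates from its boundary value by at most the block weight, which is strictly less than $n/2$. Combining this bounded-wiggle observation with Lemma~\ref{lem:binomial} (the first half has exactly $\binom{n-1}{\lfloor n/2\rfloor}$ more $0$s than $1$s) and the Remark describing the initial $1^{n-d}0^{d}$ run and the $1^{d'-1}$ rotation, one pins $\max_i j_i = n-d$ near the start and $\min_i j_i$ at the junction, with a $+1$ correction for even $n$ coming from the $0$-heavy length-$(n-1)$ prefix of $\mathcal{D}^r_{d'}(n)$. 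The bookkeeping then collapses to exactly $\binom{n-1}{\lfloor n/2\rfloor} + \lfloor n/2\rfloor$; I expect this part to be technical but routine, the only delicate point being the exact additive constant rather than its order of growth.

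For global maximality I would use a window-counting argument. By Lemma~\ref{lemma:substring} and complementation symmetry, take a substring $u = u_1\cdots u_L$ realizing $\disc(\mathcal{D}) = |u|_1 - |u|_0 =: E$; since $\mathcal{D}$ is de Bruijn, its $m = L-n+1$ length-$n$ windows $w_1,\dots,w_m$ are pairwise distinct. Summing window excesses and reindexing by the number of windows covering each position gives the identity
\[
  T := \sum_{i=1}^m \bigl(2|w_i|_1 - n\bigr) = nE + n(n-1) - 2\,\mathrm{corr},
\]
where $\mathrm{corr}$ is a boundary term supported on the first and last $n-1$ positions of $u$ and satisfies $0 \le \mathrm{corr} \le n(n-1)$. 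Distinctness bounds $T$ from above: at most $\binom nk$ windows have weight $k$, and only the terms with $2k-n>0$ help, so
\[
  T \le \sum_{k>n/2}(2k-n)\binom nk = n\binom{n-1}{\lfloor n/2\rfloor},
\]
the last equality being exactly the telescoping $(2k-n)\binom nk = n\bigl(\binom{n-1}{k-1}-\binom{n-1}{k}\bigr)$ already used in Lemma~\ref{lem:binomial}. Feeding both bounds into the identity yields $E \le \binom{n-1}{\lfloor n/2\rfloor} + (n-1)$.

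The main obstacle is the remaining additive gap of $\lceil n/2\rceil - 1$ between this crude bound and the conjectured value $\lfloor n/2\rfloor$. It arises because the two estimates $T \le n\binom{n-1}{\lfloor n/2\rfloor}$ and $\mathrm{corr}\le n(n-1)$ are maximized by conflicting configurations: making $\mathrm{corr}$ large forces $u$ to begin and end with long runs of the heavy symbol, which both wastes imbalance on under-counted boundary positions and, through the de Bruijn property, removes high-weight windows straddling the endpoints from the interior (as one sees concretely in $\mathcal{DB}_{max}(n)$, whose extremal substring uses \emph{almost} but not all weight-$>n/2$ windows). Closing the gap therefore requires a single coupled inequality bounding $T + 2\,\mathrm{corr}$, equivalently an extremal analysis of imbalance-maximizing edge-trails in the de Bruijn graph that simultaneously controls interior window weights and endpoint structure. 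I expect this coupling to be the crux, and it is presumably why the maximality half is stated only as a conjecture.
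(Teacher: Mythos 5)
You should first note what the paper itself does with this statement: it is Conjecture~\ref{con:max}, and the authors do not prove it. The paper establishes only the lower bound $\disc(\mathcal{DB}_{max}(n)) \geq \binom{n-1}{\lfloor n/2\rfloor} + \lfloor n/2\rfloor$ (the theorem preceding the conjecture) and supports both the exact value and the global-maximality claim with computations for $10 \leq n \leq 30$. So there is no proof in the paper to compare against; the only question is whether your proposal settles the conjecture, and it does not.

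There are two gaps, one of which you concede yourself. For the exact value of $\disc(\mathcal{DB}_{max}(n))$, the decisive step is asserted rather than proved: one must show that no prefix value $j_i$ ever exceeds $n-d$ or dips below $-\binom{n-1}{\lfloor n/2\rfloor}$ (odd $n$), respectively $-\binom{n-1}{\lfloor n/2\rfloor}-1$ (even $n$). Your stated bounds are not enough for this: in the second half the within-block downward wiggle is controlled by the number of $0$s in the block, which can be as large as $\lfloor n/2\rfloor$ (not ``strictly less than $n/2$''), and for even $n$ the blocks of weight exactly $n/2$ have imbalance $0$, so the boundary values are not strictly monotone. The argument actually needs the coupling that blocks rich in $0$s occur early in the lexicographic listing, when the accumulated imbalance provides slack, while late blocks have few $0$s --- exactly the kind of interaction your sketch dismisses as routine bookkeeping. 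It is likely completable, but it is not done. For global maximality, your window-counting identity, the bound $0 \leq \mathrm{corr} \leq n(n-1)$, and the telescoping evaluation $\sum_{2k>n}(2k-n)\binom{n}{k} = n\binom{n-1}{\lfloor n/2\rfloor}$ all check out, and the resulting unconditional bound $\disc(\mathcal{D}) \leq \binom{n-1}{\lfloor n/2\rfloor} + (n-1)$ for \emph{every} de Bruijn sequence is a genuinely nice partial result --- an explicit constant where the paper only cites an asymptotic $O(2^n/\sqrt{n})$ bound, and it pins the maximum discrepancy to within an additive $O(n)$ term. But, as your final paragraph candidly explains, it leaves a gap of $\lceil n/2\rceil - 1$ between this bound and the conjectured value, and the coupled extremal analysis needed to close it is precisely the open content of the conjecture. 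A proposal that ends by explaining why its own method cannot reach the claimed constant is a partial result, not a proof.
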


As noted earlier, the discrepancy of the {\bf Cool-lex} construction matches the discrepancy for the {\bf Weight-range} construction for $10 \leq n \leq 25$, except for when $n \bmod 4 \equiv 1$ (see Table~\ref{tab:discreps}).   As illustration, the {\bf Cool-lex} construction first constructs cycles of the following weights for $n=6,7,8,9$:
\begin{itemize}
\item $n=6$: (0,1,2), (3,4), (5,6)    
\item $n=7$: (0,1), (2,3), (4,5), (6,7) 
\item $n=8$: (0,1,2), (3,4), (5,6), (7,8)
\item $n=9$: (0,1), (2,3), (4,5), (6,7), (8,9)
\end{itemize}
before joining them together one at a time.  Note when $n=9$, strings with weights $4$ and $5$ are grouped together before the smaller cycles are joined together.  This causes a reduction in the discrepancy compared to the {\bf Weight-range} construction.  It is possible, however, to tweak the {\bf Cool-lex} implementation so the discrepancies are equivalent. For instance for $n=9$, the smaller cycles with weights $(0,1,2),(3,4),(5,6),(7,8,9)$ could be joined together instead.

Recently, a shift-rule construction based on the PSR and CSR has been discovered to generate the same sequence as {\bf Cool-lex}~\cite{PSR-manuscript}.  A discussion on generating de Bruijn sequences applying the PSR and CSR is also given in~\cite{Etzion1987}; it describes joining small cycles together in the same manner as {\bf Cool-lex}. Thus, we anticipate the resulting sequences would obtain a similar discrepancy profile.

\section{Group 5: LFSR constructions based on primitive polynomials} \label{sec:lfsr}

In this section we consider de Bruijn sequences that can be generated for a specific $n$ by a primitive polynomial of degree $n$.  As discussed by Golomb~\cite{golomb2017}, a primitive polynomial of the form $g(x) = c_0 + c_1x + c_2x^2 + \cdots + c_nx^n$ over GF(2) corresponds to a feedback function of the form $f(a_1a_2\cdots a_n) = c_na_1 \oplus c_{n-1}a_2 \oplus \cdots \oplus c_1a_n$. 
\begin{exam}
The primitive polynomial 
$1 + x^2 + x^5$ of degree $5$ over GF(2) corresponds to the feedback function
$f(a_1a_2a_3a_4a_5) = a_1 + a_4$.  If $a_1a_2a_3a_4a_5$ is initialized to $00001$, then
the LFSR with this feedback function produces the m-sequence
$$0000101011101100011111001101001$$
of length $2^5-1=31$ when outputting the value $a_1$ before each application of the LFSR. By prepending a 0 to the beginning of this m-sequence we obtain a de Bruijn sequence.
\end{exam}

 To obtain the data in Table~\ref{tab:lfsr}, we generated all primitive polynomials of degree $n$ for $n=10,11, \ldots, 25$ along with their corresponding LFSRs. The algorithm used to exhaustively list the primitive polynomials is based on the work in~\cite{Cattell&etal:2000} and is available at \url{http://debruijnsequence.org/lfsr}.   We seeded the LFSRs with $0^{n-1}1$ as described in the above example to obtain a de Bruijn sequence.   We then computed the discrepancy of all such sequences.  The number of primitive polynomials (and hence LFSRs) of degree $n$ is given by sequence A011260 in the On-Line Encyclopedia of Integer Sequences~\cite{OEIS}.   This number is listed in the final column of Table~\ref{tab:lfsr}.

Below is a list of feedback functions for $n=10,11, \ldots, 25$ that generated de Bruin sequences with discrepancy closest to the corresponding entry for a random sequence.

\begin{center}
\footnotesize
\begin{tabular} {c |  l r r }
$n$ 	&   {\bf Feedback function} &  \bblue{Random}  &   {\bf Discrepancy}  \\ \hline
10 &     $a_1 \oplus a_2 \oplus a_6 \oplus a_9$  &  \blue{50} 	            &    46    \\
11 &     $a_1 \oplus a_{6} \oplus a_{7} \oplus a_{10}$ &   \blue{71}     	        &    68      \\
12 &     $a_1 \oplus a_{4} \oplus a_{7} \oplus a_{8} \oplus a_{9} \oplus a_{12}$  &  \blue{101}    	        &    99   \\
13 &     $a_1 \oplus a_{2} \oplus a_{4} \oplus a_{5} \oplus a_{6} \oplus a_{7} \oplus a_{8} \oplus a_{10} \oplus a_{11} \oplus a_{12}$ &  \blue{143}	      	    &     143   \\
14 &     $a_1 \oplus a_{2} \oplus a_{4} \oplus a_{5} \oplus a_{7} \oplus a_{8} \oplus a_{9} \oplus a_{13}$&  \blue{203}    	        &    203    \\ 
15 &     $a_1 \oplus a_{2} \oplus a_{6} \oplus a_{11}$ &   \blue{288}     	        &    287   \\ 
16 &     $a_1 \oplus a_{2} \oplus a_{3} \oplus a_{4} \oplus a_{13} \oplus a_{15}$ &  \blue{407}     	        &    406   \\ 
17 &     $a_1 \oplus a_{3} \oplus a_{5} \oplus a_{8} \oplus a_{9} \oplus a_{10} \oplus a_{11} \oplus a_{12} \oplus a_{13} \oplus a_{15} \oplus a_{16} \oplus a_{17}$ &  \blue{575} 	      	    &    575    \\ 
18 &     $a_1 \oplus a_{4} \oplus a_{5} \oplus a_{7} \oplus a_{10} \oplus a_{11} \oplus a_{12} \oplus a_{15} \oplus a_{16} \oplus a_{17}$
 &   \blue{815}     	        &   814      \\   
19 &     $a_1 \oplus a_{2} \oplus a_{3} \oplus a_{5} \oplus a_{6} \oplus a_{10} \oplus a_{11} \oplus a_{14} \oplus a_{16} \oplus a_{17}$ &  \blue{1157}    	        &   1157     \\   
20 &     $a_1 \oplus a_{2} \oplus a_{3} \oplus a_{4} \oplus a_{5} \oplus a_{6} \oplus a_{7} \oplus a_{8} \oplus a_{9} \oplus a_{10} \oplus a_{11} \oplus a_{12} \oplus a_{13} \oplus a_{15} \oplus a_{16} \oplus a_{17} \oplus a_{19} \oplus a_{20}$  &  \blue{1634}     	    &   1633    \\ 
21 &     $a_1 \oplus a_{2} \oplus a_{3} \oplus a_{5} \oplus a_{8} \oplus a_{11} \oplus a_{16} \oplus a_{19} \oplus a_{20} \oplus a_{21}$  &  \blue{2311}    	        &   2311   \\  
22 &     $a_1 \oplus a_{3} \oplus a_{6} \oplus a_{12} \oplus a_{13} \oplus a_{15} \oplus a_{17} \oplus a_{18} \oplus a_{19} \oplus a_{20} \oplus a_{21} \oplus a_{22}$ &   \blue{3264}   	        &  3264     \\ 
23 &     $a_1 \oplus a_{2} \oplus a_{3} \oplus a_{4} \oplus a_{5} \oplus a_{6} \oplus a_{9} \oplus a_{10} \oplus a_{12} \oplus a_{14} \oplus a_{16} \oplus a_{17} \oplus a_{18} \oplus a_{23}$ &     \blue{4565} 	        &  4565     \\ 
24 &     $a_1 \oplus a_{2} \oplus a_{4} \oplus a_{6} \oplus a_{12} \oplus a_{13} \oplus a_{14} \oplus a_{15} \oplus a_{16} \oplus a_{17} \oplus a_{18} \oplus a_{20} \oplus a_{21} \oplus a_{24}$ & \blue{6252}   	        &  6252     \\ 
25 &    $a_1 \oplus a_{8} \oplus a_{9} \oplus a_{11} \oplus a_{12} \oplus a_{16} \oplus a_{20} \oplus a_{21} \oplus a_{24} \oplus a_{25}$     & \blue{9192}  	                & 9192         \\  
\end{tabular}
\end{center}


\section{Future directions and open problems} \label{sec:fut}

In this paper, we investigated the discrepancies of $13$ de Bruijn sequence constructions.  We proved that two constructions attain the lower bound of $\Theta(n)$ and presented one new construction that attains the upper bound of $\Theta(\frac{2^n}{\sqrt{n}})$.  It remains an interesting problem to demonstrate a generic construction for all $n$ with discrepancy that is close to that of a random stream of bits of the same length. Some additional avenues of future research include the following.

\begin{enumerate}
\item Simplify the description of the  {\bf Huang} construction~\cite{huang}.  Does it have the smallest discrepancy over all de Bruijn sequences?
\item Answer the conjectures regarding the discrepancies for the greedy {\bf Pref-same} and {\bf Pref-opposite} constructions (Conjecture~\ref{con:same} and Conjecture~\ref{con:opp}).
\item Analyze the discrepancy of {\bf PCR4} which had discrepancy closest to one we might expect from a random stream of bits.
\item Determine whether or not the maximum discrepancy of any de Bruijn sequence is ${n-1 \choose \lfloor n/2 \rfloor} + \lfloor \frac{n}{2} \rfloor$ (Conjecture~\ref{con:max}).
\item Generalize the investigation of discrepancy to de Bruijn sequences over an arbitrary alphabet size $k$.
\item Study the distribution of discrepancy over all possible de Bruijn sequences.
\end{enumerate}



\section{Acknowledgement}\label{sec:ack}
The research of Joe Sawada is supported by the \emph{Natural Sciences and Engineering Research Council of Canada} (NSERC) grant RGPIN-2018-04211.

\bibliographystyle{abbrv}
\bibliography{refs}

\newpage
\appendix
\section{Table of discrepancies}

\begin{center}
\footnotesize
\begin{tabular} {c |  cccc | c c c   }
  & \multicolumn{4}{c}{\bf ( Group 1 )}  &  \multicolumn{3}{c}{\bf ( Group 2 ) } \\
$n$ 	& {\bf Huang} & {\bf CCR2} & {\bf CCR3} & {\bf CCR1} & {\bf Pref-same} & {\bf Lex-comp} & {\bf Pref-opposite}   \\ \hline
10 &      12 &       13 &   13   &     16 &    ~  24 &        ~ 24    &    ~ 27  \\
11 &      13 &       14 &  15  &      18 &    ~  29 &        ~ 29    &    ~ 34  	\\
12 &      15 &      16 &  16  &      22 &     ~ 35 &        ~ 35    &    ~ 43  	\\
13 &      16 &      17 &  18  &      23 &    ~ 43 &        ~ 43    &    ~ 52  	\\
14 &      18 &      19 &  20  &      30 &     ~ 48 &        ~ 48    &    ~ 63  	\\
15 &      19 &      21 &  21  &      29 &    ~ 59 &        ~ 59    &    ~ 74  	\\
16 &      21 &      22 &  23  &      36 &    ~  68 &        ~ 68    &    ~ 87 \\
17 &      22 &      24 &  25  &      37 &     ~ 79 &        ~ 79    &   100	\\
18 &      24 &      26 &  26  &      43 &     ~ 88 &         ~ 88   &   115  	\\
19 &      25 &      27 &  28  &      43 &     103 &       103  &    130	\\
20 &      27 &      29 &  30  &      52 &     114 &       114   &    147	\\
21 &      28 &      31 &  31  &      50 &     127 &       127  &    164	\\
22 &      30 &      32 &  33  &      59 &     142 &       142  &    183	\\
23 &      31 &      34 &  35  &      59 &     155 &       155  &    202	\\
24 &      33 &      36 &  36  &      67 &     172 &       172  &    223	\\
25 &      35 &      37 &  38  &      66 &     187 &       187  &    244	 \\
26 &   36 &      39	 &      40	&      77	&     208		 &     208		&     267			\\
27 &    38 &      41	 &      42	&      74	&     224		 &     224		&     290			\\
28 &    40 &      43	 &      43	&      85	&     246		 &     246		&     315			\\
29 &    41 &      44	 &      45	&      84	&     264		 &     264		&     340			\\
30 &    43 &      46	 &      47	&      94	&     286		 &     286		&     367			\\
\end{tabular}
\end{center}

\medskip

\begin{center}
\footnotesize
\begin{tabular} {c |  r r r r r | r r}
  & \multicolumn{5}{c}{\bf ( Group  3 )}  &  \multicolumn{2}{c}{\bf ( Group 4 ) } \\
$n$ 	&  {\bf PCR4} &  \bblue{Random}  &  {\bf PCR3}  & {\bf PCR2}   & {\bf Prefer-1/PCR1}  & {\bf Cool-lex}  & {\bf Weight-range}  \\ \hline
10 &       29     &  \blue{50} 	&      75           	&     101   &     120   &     131	    &     131	\\
11 &      41      &   \blue{71} 	&     141    	 &     180 &     222  &     257	&     257	\\
12 &     51       &  \blue{101} 	&     248   		&     321 &     416  &     468 	&     468	\\
13 &     70       &  \blue{143}	&     468   		&     587 &     784  &     801 	&     930	\\
14 &     85       &  \blue{203} 	&     850   		&    1065 &    1488  &    1723 	&    1723	\\
15 &    110      &   \blue{288}  	&    1604   	&    1974 &    2824  &    3439 	&    3439	\\
16 &    175      &  \blue{407}  	&    2965   	&    3632 &    5376  &    6443 	&    6443	\\
17 &    246      &  \blue{575} 	&    5594   	&    6785 &   10229  &   11452 	&   12878	\\
18 &    326      &   \blue{815} 	&   10461    	&   12635 &   19484  &   24319 	&   24319	\\
19 &     462     &  \blue{1157} 	&   19765   	&   23746 &   37107  &   48629	&   48629	 \\
20 &       730   &  \blue{1634}   	&   37243  	&   44585 &   71250  &   92388 		&   92388	\\
21 &     954     &  \blue{2311} 	&   70575   	&   84270 &  138332  &  167975 	&  184766		\\
22 &    1327    &   \blue{3264} 	&  133737  	&  159281 &  268582  &  352727 	&  352727		\\
23 &    1820    &    \blue{4565}	&  254322 	&  302449 &  521553  &  705443	&  705443		 \\
24 &    2684    &    \blue{6252}	&  484172 	&  574819 & 1012795  & 1352090	& 1352090 		\\
25 &    3183    &   \blue{9192} 	&  924071 	& 1096009 & 1966813  & 2496163	& 2704168		 \\
26 &   4108	& \blue{13074}	 & 1766284 	  & 2092284 	 & 3819605 		& 5200313 		& 5200313 		\\
27 &   5604	& \blue{17933}	 & 3382851 	 & 4004050 	& 7453523 		 & 10400613 		 & 10400613		\\
28 &  7629	& \blue{22672}	 & 6488970 	 & 7672443 	 & 14544826		 & 20058314 		 & 20058314		\\
29 &  10433	& \blue{34591}	  & 12468181	  & 14730243	 & 28382864		 & 37442182 		 & 40116614		\\
30 &  13637	& \blue{57357}	  & 23991972	  & 28316271	 & 55421919		 & 77558775 		 & 77558775		\\
\end{tabular}
\end{center}

\newpage

\section{C program to construct maximum-discrepancy de Bruijn sequences}
\footnotesize
\begin{code}
#include <stdio.h>
int n,c,a[100],first;

//-------------------------------------------------------------------------------
// Generate the lexicographically smallest universal cycle (de Bruijn sequence)
// for binary strings of length "n" with minimum weight "c" when comp = 0. When
// comp=1, it complements the bits producing a maximum weight n-c universal cycle
//-------------------------------------------------------------------------------
void Gen(int t, int p, int w, int comp) {
    int i;
    
    if (t > n) {
        if (n
            if (first == 0) {
                for (i=1; i<=c; i++) printf("0");
                first = 1;
            }
            else {
                for (i=1; i <= p; i++) printf("
            }
        }
    }
    else {
        // Append 0
        a[t] = 0;
        if (a[t-p] == 0 && c-w < n-t+1)  Gen(t+1,p,w,comp);
        
        // Append 1
        a[t] = 1;
        if (a[t-p] == 1) Gen(t+1,p,w+1,comp);
        else Gen(t+1,t,w+1, comp);
    }
}
//===============================================
int main() {
    
    printf("Enter n: ");   scanf("
    
    c = n/2+1;
    for (int i=1; i<=c-1; i++) printf("1");
    a[0] = 0; first = 1;
    Gen(1,1,0,0);  
    
    c = n-c+1; first = 0;
    Gen(1,1,0,1);	
    printf("\n");
}
\end{code}

\end{document}